\DeclareMathOperator*{\E}{\mathbb{E}}
\newcommand*{\deff}{d}
\newcommand*{\att}{a}
\DeclareMathOperator*{\argmin}{\arg\min}
\newtheorem{theorem}{Theorem}
\newtheorem{lemma}{Lemma}
\begin{document}
\title{Defending Elections Against Malicious Spread of Misinformation}
\author{Bryan Wilder$^1$ \and Yevgeniy Vorobeychik$^2$\\$^1$Center for Artificial Intelligence in Society, University of Southern California, bwilder@usc.edu\\$^2$Department of Computer Science and Engineering, Washington University in St.\ Louis, yvorobeychik@wustl.edu}
\maketitle
\begin{abstract}
	The integrity of democratic elections depends on voters' access to accurate information. However, modern media environments, which are dominated by social media, provide malicious actors with unprecedented ability to manipulate elections via misinformation, such as fake news. We study a zero-sum game between an attacker, who attempts to subvert an election by propagating a fake new story or other misinformation over a set of advertising channels, and a defender who attempts to limit the attacker's impact. Computing an equilibrium in this game is challenging as even the pure strategy sets of players are exponential. Nevertheless, we give provable polynomial-time approximation algorithms for computing the defender's minimax optimal strategy across a range of settings, encompassing different population structures as well as models of the information available to each player. Experimental results confirm that our algorithms provide near-optimal defender strategies and showcase variations in the difficulty of defending elections depending on the resources and knowledge available to the defender.
\end{abstract}

\section{Introduction}

Free and fair elections are essential to democracy. However, the integrity of elections depends on voters' access to accurate information about candidates and issues. Oftentimes, such information comes via news media or political advertising. When these information sources are accurate and transparent, they serve an important role in producing well-functioning elections. However, because of the great impact that messaging can have on voter behavior \cite{gerber2009does,dellavigna2007fox,brader2005striking}, such information can also subvert legitimate elections when deliberately falsified by malicious actors. 

In traditional media environments, such subversion is relatively difficult because professional news organizations serve as gatekeepers to information spread. However, modern media environments are increasingly decentralized due to the importance of social networks such as Facebook or Twitter, which allow outside actors to spread political information directly amongst voters \cite{chi2011twitter,wattal2010web,holcomb2013news}. This presents an unprecedented opportunity for malicious actors to spread deliberately falsified information -- ``fake news" -- and in doing so, influence the results of democratic elections. Such concerns are particularly salient in light of the 2016 U.S.\ presidential election. Recent research shows that, on average, an American adult was exposed to at least one fake news story during the campaign \cite{allcott2017social} and that these stories influenced voter attitudes \cite{pennycook2017prior}.

Prior work on election control has considered a number of mechanisms for election interference, including bribery~\cite{faliszewski2009llull,baumeister2015complexity,erdelyi2017complexity,yang2016hard},
adding or deleting voters
\cite{erdelyi2015more,loreggia2015controlling,faliszewski2011multimode,liu2009parameterized},
and adding or deleting candidates
\cite{chen2015elections,liu2009parameterized}. Only recently has social influence been explicitly studied as a means of election control \cite{sina2015adapting,wilder2018controlling,faliszewski2018opinion}. Further, with only a few exceptions which do not consider social influence~\cite{Li17,Yin18}, election control has so far primarily been studied from the attacker's perspective (to establish the computational complexity of controlling an election when the attacker is the only actor). 

We therefore ask the following natural question: \emph{how can a defender mitigate the impact of fake news on an election?} 
For instance, a social media platform or a news organization may have the ability to detect and label fake news stories on a given advertising channel, or propagate a counter-message with more accurate information. We model this interaction as a zero-sum game between an attacker, attempting to influence voters by advertising on a subset of possible channels, and a defender who enacts counter-measures on 
a subset of channels. 
The goal for the attacker is to maximize the expected number of voters who switch to the attacker's preferred candidate, whereas the defender's goal is to minimize this quantity. 
Note that in this model the defender is neutral with respect which candidate actually wins; they focus solely on minimizing the attacker's malicious influence.

Computing equilibria is computationally challenging due to the exponential number of possible actions for each player. Complicating the problem, in practice the defender may have considerable uncertainty about which candidate each voter prefers at the start of the game (information which is needed to effectively target limited resources).  We provide efficient algorithms, backed by theoretical guarantees and empirical analysis, across a range of settings:

\begin{enumerate}
	\item In the \emph{disjoint} case, each voter can be reached by only one advertising channel, modeling a case where each channel corresponds to a different demographic group. We give an FPTAS for the minimax equilibrium strategies.
	
	\item In the \emph{nondisjoint} case, each voter can be reached by an arbitrary set of channels. We first prove that the associated computational problem is APX-hard.
	We then provide an algorithm with a bicriteria guarantee: it guarantees the defender a constant-factor approximation to the optimal payoff but relaxes the budget constraint.
	
	\item We consider three models of uncertainty about voter preferences. The first is \emph{stochastic} uncertainty where the preference profile is drawn from a distribution. The second is \emph{asymmetric} uncertainty where the preference profile is drawn from a distribution and the attacker observes the realized draw. The third is \emph{adversarial} uncertainty where the preference profile is chosen to be the worst possible for the defender within an uncertainty set. Collectively, these models allow us to capture a range of assumptions about the information available to each player. Surprisingly, we show that across all three models, and in both the disjoint and nondisjoint cases, the defender can obtain exactly the same approximation ratios as when preferences are known exactly. 
\end{enumerate}

\section{Problem Formulation}

We consider a set of voters $V$ (with $|V| = n$) and a set of advertising channels $C$ (with $|C| = m$). $C$ and $V$ form a bipartite graph; that is, each voter is reachable by one or more advertising channels. The voters participate in an election between two candidates, $c_\att$ and $c_\deff$. An attacker aims to ensure that one of these candidates, $c_\att$, wins the election. A defender aims to protect the election against this manipulation. Each voter $v$ has a preferred candidate who they vote for. Let $\theta_v = 1$ if $v$ initially prefers $c_\deff$ and 0 otherwise. 

The attacker attempts to alter election results by spreading a message (a fake news story) amongst the voters. More precisely, the attacker has a limited advertising budget and can send the message through at most $k_\att$ channels. If channel $u$ is chosen by the attacker, then any voter $v$ with an edge to $u$ switches their vote to $c_\att$ with probability $p_{uv}$, where all such events are independent. The defender can protect voters from the attacker's misinformation, for example by detecting and labeling falsified stories on a given advertising channel, or by attempting to propagate a counter-message of their own. If the defender protects channel $v$, each voter connected to $v$ is ``immunized" against the attacker's message independently with probability $q_{uv}$. 
The defender may select up to $k_\deff$ channels.

We model this interaction as a zero-sum game between the attacker and defender. In this setting, equilibrium strategies are unaffected by whether one party must first commit to a strategy (formally, the Nash and Stackelberg equilibria are equivalent). Hence without loss of generality, we consider a simultaneous-move game and seek to compute a Nash equilibrium. The defender's strategy space is all subsets of $k_\deff$ channels to protect, while the attacker's strategy space consists of all subsets of $k_\att$ channels to attack. Hence, each player has an exponentially large number of pure strategies, substantially complicating equilibrium computation. 

We now introduce the attacker's objective, which determines the payoffs for the game. When the defender chooses a set of channels $S_\deff$ and the attacker chooses $S_\att$, let $f(S_\deff, S_\att)$ be the expected number of voters who previously preferred $c_\deff$ but switch their vote to $c_\att$. The randomness is over which voters are reached by the attacker's message (determined by the probabilities $p_{uv}$ and $q_{uv}$). Formally, we can express $f$ as
\begingroup\makeatletter\def\f@size{9.5}\check@mathfonts
\begin{align*}
f(S_\deff, S_\att) = \sum_{v \in V} \theta_v \left(\prod_{u \in S_\deff} 1 - q_{uv}\right) \left(1 - \prod_{u \in S_\att} 1 - p_{uv}\right)
\end{align*}
\begingroup\makeatletter\def\f@size{10}\check@mathfonts
where the first product is the probability that the defender fails to reach voter $v$ and the second is the probability that the attacker succeeds. The term $\theta_v$ means that only voters who initially prefer $c_\deff$ count (since they are the only ones who can switch). The attacker's payoff is simply $f(S_\deff, S_\att)$, while the payoff for the defender is $-f(S_\deff, S_\att)$; in words, the defender aims to minimize the spread of misinformation.

We consider two models for how the population may be structured. 
In the \emph{disjoint} model, 
the advertising channels partition the population so that each voter has an edge to exactly one channel. This models a case where the channels represent demographic groups and the attacker is deciding which demographics to target. 
In the more general \emph{nondisjoint} model, voters may be reached through multiple channels; thus, the edges can form an arbitrary bipartite graph. 

We begin by considering the case where $\theta$ (the voters' initial preferences) are common knowledge. Subsequently, we consider the setting in which voter preferences are uncertain.

\section{Related Work}

We survey related work in two areas. First, recent work in social choice studies the interaction between social influence and elections. However, all such work examines the attacker's problem of manipulating the election, leaving open the question of how elections can be defended against misinformation. Most closely related is the work of Wilder and Vorobeychik \shortcite{wilder2018controlling}, who study the attacker's problem of manipulating an election in a model where social influence spreads amongst voters from an attacker's chosen ``seed nodes". However, they do not study the corresponding defender problem. Our model is also somewhat different in that we consider advertising to voters across a set of channels, rather than influence among the voters themselves. The work of Berdereck et al.\ \shortcite{bredereck2016large} is also closely related. They study the attacker's problem in a bribery setting where a single action (e.g., placing an ad) can sway multiple voters. Faliszewski et al.\ \shortcite{faliszewski2018opinion} extend this to a domain where the initially bribed agents can influence others. Bredereck and Elkind \shortcite{bredereck2017manipulating} also study a problem of manipulating diffusions on social networks, though not specifically in the context of elections. 	\begin{algorithm}
	\caption{FPLT($\epsilon$)}\label{alg:ftpl}
	\begin{algorithmic}[1] 
		\State Arbitrarily initialize $S^0_\deff$ and $S^0_\att$
		\For{$t = t...T$}
		\State Draw $p_\att, p_\deff$ uniformly at random from $[0, \frac{1}{\epsilon}]^{m}$
		\State //TopK returns the set consisting of the indices of the smallest $k$ entries of the given vector
		\State $S_\att^t = TopK(\sum_{s  =1}^{t-1} \ell(S_\deff^s) +p_\att, k_\att)$ 
		\State $S_\deff^t = TopK(\sum_{s  =1}^{t-1} \ell(S_\att^s) + p_\deff, k_\deff)$
		\EndFor
		\State \Return $\{S_\att^t\}$ and $\{S_\deff^t\}$
	\end{algorithmic}
\end{algorithm}Sina et al.\ \shortcite{sina2015adapting} study a different form of manipulation, where edges may be added to the graph. Together, this body of work demonstrates substantial interest in the election control literature in emerging threats such as fake news. Our contribution is the first study of these problems from the perspective of a defender.  

Second, our work is related to a complementary literature on budget allocation problems. Budget allocation is the attacker's problem in our model with no defender intervention: allocating an advertising budget to maximize the number of people reached. Efficient algorithms are available for a number of variants on this model \cite{alon2012optimizing,soma2014optimal,miyauchi2015threshold,staib2017robust}. None of this work studies the game-theoretic problem of a defender trying to prevent an attacker from reaching voters. Soma et al.\ \shortcite{soma2014optimal} study a game where multiple advertisers compete for consumers, but not where one advertiser solely attempts to block the other. Their game is a potential game with pure strategy equilibria; however, it is easy to give examples in our model where the zero-sum nature of the attacker-defender interaction requires randomization. This makes equilibrium computation harder because we cannot simply use the best response dynamics.  Our work is also related to the influence blocking maximization (IBM) problem \cite{he2012influence} where one player attempts to limit the spread of a cascade in a social network. However, in IBM the starting points of the cascade are fixed in advance; in our problem the adversary chooses a randomized strategy to evade the defender.

\section{Disjoint populations}

In this setting, the population of voters is partitioned by the channels. Let $V_u$ denote the set of voters affiliated with channel $u$. Exploiting the disjoint structure of the population, we can use linearity of expectation to rewrite the utility function $f(S_\deff, S_\att)$ as

\begin{align*}
&\sum_{u\in S_\att \setminus S_\deff} \sum_{v \in V_u} \theta_v p_{uv} + \sum_{u\in S_\att \cap S_\deff}\sum_{v \in V_u} \theta_v p_{uv} (1 - q_{uv}) \\
&= \sum_{u\in S_\att} \sum_{v \in V_u} \theta_v p_{uv} - \sum_{u \in S_\att \cap S_\deff} \sum_{v \in V_u} \theta_v p_{uv}q_{uv}.
\end{align*}
%
%
Importantly, this expression is \emph{linear} in each player's decisions. More formally, let $1[S]$ denote the indicator vector of a set $S$. Define the loss vector $\ell(S_\att)$ to have value $1[u \in S_\att] \sum_{v \in V_u} \theta_v p_{uv}q_{uv}$ in coordinate $u$.    \begin{algorithm}
	\caption{OnlineGradient$\left(\eta, \alpha, T, k_\att\right)$}\label{alg:eg}
	\begin{algorithmic}[1] 
		\State $x^0_i = \frac{1}{mk_\att}$ for $i = 1...m$
		\For{$t = 1...T$}
		\State //Greedily maximizes a function subject to budget
		\State $S_\deff^t$ = Greedy($g(\cdot|x^t_\att)$, $\alpha k_\deff$)
		\State $\nabla^t = \nabla F(x^{t-1}|S_\deff^t)$
		\State $x^{t+1}$ = Update($x_t$, $\nabla^t$)
		\EndFor
		\State \Return $\{S_\deff^t\}$
		\Function {ExponentiatedGradientUpdate}{}
		\State $y^t = \min\{x^t e^{\eta \nabla^t}, 1\}$
		\State $x^{t+1} = \frac{k_\att y^t}{||y^{t}||_1}$
		\EndFunction
		\Function {EuclideanUpdate}{}
		\State $x^{t+1} = \argmin_{y \in \mathcal{X}} ||y - (x^t + \eta \nabla^t)||_2$
		\EndFunction
		
	\end{algorithmic}
\end{algorithm}Then, we have that $f(S_\deff, S_\att) =\sum_{u \in S_\att} \sum_{v \in V_u}\theta_v p_{uv} - 1[S_\deff]^\top \ell(S_\att)$, where the first term is constant with respect to $S_\deff$. Similarly, we can define a loss vector $\ell(S_\deff)$ which encapsulates the attacker's payoff for any defender action $S_\deff$.

To exploit this structure, we employ an algorithmic strategy based on online linear optimization. In such problems, a player seeks to optimize a (possibly adversarially chosen) sequence of linear functions over a feasible set. The aim is to achieve low \emph{regret}, which measures the gap in hindsight to the best fixed decision over $T$ rounds. We map online linear optimization onto our problem as follows. The feasible set for each player consists of $m$-dimensional binary vectors, where a 1 indicates that the player has chosen the corresponding channel and a 0 indicates that they have not. A vector is feasible if it sums to at most $k_\deff$ (for the defender) or $k_\att$ (for the attacker). Both the attacker and defender will choose a series of actions from the corresponding feasible sets. In iteration $t$, if the attacker chooses a set $S_\att^t$, and the defender receives a loss vector $\ell(S_\att^t)$ and suffers loss $1[S_\deff^t]\ell(S_\att^t)$. The attacker's loss functions are defined similarly.  


Each player will generate their actions using the classical \emph{Follow The Perturbed Leader (FTPL)} algorithm of Kalai and Vempala \shortcite{kalai2005efficient} (Algorithm \ref{alg:ftpl}). At each iteration, each player best responds to the uniform distribution over all strategies played so far by their opponent, plus a small random perturbation. Note that best response here corresponds to linear optimization over the player's feasible set. Since any budget-satisfying vector is feasible, we simply select the highest-weighted $k_\deff$ elements (or $k_\att$ for the attacker). Since FTPL has a no-regret guarantee for online linear optimization neither player can gain significantly by deviating from their history of play once the number of iterations is sufficiently high. More precisely, we have the following:

\begin{theorem} \label{theorem:disjoint}
	With $\frac{4n^2\max\{k_a, k_d\}}{\epsilon^2}$ iterations of FTPL, uniform distributions on $\{S_a^t\}$ and $\{S_d^t\}$ form an $\epsilon$-equilibrium. 
\end{theorem}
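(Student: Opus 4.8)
The plan is to read the $T$-round execution of FTPL (Algorithm~\ref{alg:ftpl}) as two coupled instances of online \emph{linear} optimization, apply the no-regret guarantee of FTPL to each player separately, and then invoke the classical fact that the time-averaged play of no-regret learners in a zero-sum game is an approximate Nash equilibrium. For the reformulation, the rewriting of $f$ in the disjoint case shows that $f(S_\deff,S_\att)=\sum_{u\in S_\att}\sum_{v\in V_u}\theta_v p_{uv}-1[S_\deff]^\top\ell(S_\att)$ is affine in $1[S_\deff]$ for fixed $S_\att$, and symmetrically affine in $1[S_\att]$ for fixed $S_\deff$. So in round $t$ the defender faces a linear loss determined by $\ell(S_\att^t)$ over the polytope $\mathcal X_\deff=\{x\in\{0,1\}^m:\|x\|_1\le k_\deff\}$ (and the attacker symmetrically over $\mathcal X_\att$), with signs oriented so that minimizing cumulative loss coincides with maximizing one's own payoff, which is what the $TopK$ step computes. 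Because $S_\deff^t$ depends only on $S_\att^1,\dots,S_\att^{t-1}$ and $S_\att^t$ only on $S_\deff^1,\dots,S_\deff^{t-1}$, the two trajectories are well defined under simultaneous play, there is no circularity, and FTPL's regret bound applies to each player's realized loss sequence.

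For the regret bound, apply the Kalai--Vempala analysis of FTPL. The quantities that enter are the $\ell_1$-diameter of the feasible polytope, which is $O(\max\{k_\att,k_\deff\})$, and the magnitude of the loss vectors: since the channels partition the voters we have $\sum_u|V_u|=n$, so with $\theta_v,p_{uv},q_{uv}\in[0,1]$ every loss vector $\ell(\cdot)$ satisfies $\|\ell(\cdot)\|_1\le n$ and every realized single-round loss $1[S]^\top\ell(\cdot)$ lies in $[-n,n]$. With perturbations drawn uniformly from $[0,1/\epsilon_0]^m$ the FTPL bound has the form $c_1\epsilon_0 T+c_2/\epsilon_0$, where $c_1$ is the product of the single-round loss bound and the $\ell_1$-norm of the losses and $c_2$ is the $\ell_1$-diameter; balancing the two terms by the choice of $\epsilon_0$ gives expected regret $O\!\big(n\sqrt{\max\{k_\att,k_\deff\}\,T}\big)$ for each player, so after the stated $T=\Theta\!\big(n^2\max\{k_\att,k_\deff\}/\epsilon^2\big)$ rounds each player's expected time-averaged regret is at most $\epsilon/2$.

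To pass from no-regret to equilibrium, let $v$ be the value of the zero-sum game (it exists by the minimax theorem), let $\bar\sigma_\deff,\bar\sigma_\att$ be the uniform distributions over $\{S_\deff^t\},\{S_\att^t\}$, put $\bar f=\tfrac1T\sum_{t}f(S_\deff^t,S_\att^t)$, and let $\bar R_\deff,\bar R_\att\le\epsilon/2$ be the two average regrets. The defender's guarantee reads $\bar f\le\min_{S_\deff}\tfrac1T\sum_t f(S_\deff,S_\att^t)+\bar R_\deff=\min_{S_\deff}f(S_\deff,\bar\sigma_\att)+\bar R_\deff$; since $\min_{S_\deff}f(S_\deff,\bar\sigma_\att)\le\max_{\sigma_\att}\min_{\sigma_\deff}f=v$ (and the minimum over pure defender actions equals that over mixed ones), we obtain $\bar f\le v+\bar R_\deff$ and $\min_{S_\deff}f(S_\deff,\bar\sigma_\att)\ge\bar f-\bar R_\deff$. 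Symmetrically the attacker's guarantee gives $\bar f\ge v-\bar R_\att$ and $\max_{S_\att}f(\bar\sigma_\deff,S_\att)\le\bar f+\bar R_\att$. Chaining these, the defender's benefit from any unilateral deviation is $f(\bar\sigma_\deff,\bar\sigma_\att)-\min_{S_\deff}f(S_\deff,\bar\sigma_\att)\le(\bar f+\bar R_\att)-(\bar f-\bar R_\deff)=\bar R_\att+\bar R_\deff\le\epsilon$, and likewise $\max_{S_\att}f(\bar\sigma_\deff,S_\att)-f(\bar\sigma_\deff,\bar\sigma_\att)\le\bar R_\att+\bar R_\deff\le\epsilon$ for the attacker --- precisely the definition of an $\epsilon$-equilibrium.

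The learning-in-games reduction and the minimax step are routine; the real work sits in the regret bound --- pinning down the diameter and loss magnitudes for our combinatorial feasible sets and checking that the Kalai--Vempala constants collapse to exactly $4n^2\max\{k_\att,k_\deff\}/\epsilon^2$. One caveat worth stating: FTPL's regret bound is in expectation over its internal perturbations, so the conclusion is that the returned pair is an $\epsilon$-equilibrium in expectation (equivalently, has expected exploitability at most $\epsilon$); a high-probability version, if desired, follows from a concentration argument over the $T$ rounds.
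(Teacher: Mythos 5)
Your proposal is correct and follows essentially the same route as the paper's proof: cast each player's updates as FTPL on a linear loss sequence, invoke the Kalai--Vempala regret bound of order $n\sqrt{\max\{k_\att,k_\deff\}T}$, and convert the two regret bounds into an $\epsilon$-equilibrium via the standard zero-sum folk argument (the paper phrases this by sandwiching the game value $\tau$, which is the same chain of inequalities you write). Your closing caveat that the guarantee is in expectation over FTPL's internal perturbations is a fair point the paper leaves implicit, but it does not change the argument.
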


\section{Nondisjoint populations}

When voters may be reachable from multiple advertising channels, the approach from the previous section breaks down because utility is no longer linear for either player: selecting one channel reaches a subset of voters and hence reduces the gain from selecting additional channels. 
Indeed, we can obtain the following hardness result:

\begin{theorem}
	In the nondisjoint setting, computing an optimal defender mixed strategy is APX-hard.
\end{theorem}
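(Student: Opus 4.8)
The plan is to reduce from \textsc{Maximum $k$-Coverage}, which (by Feige's theorem) is NP-hard to approximate within $1 - 1/e + \epsilon$ for every $\epsilon > 0$, and hence APX-hard. The starting point is that the nondisjoint model contains a purely combinatorial optimization problem for the defender as a degenerate case. Give the attacker budget $k_\att = m$. Since $f(S_\deff, S_\att)$ is monotone nondecreasing in $S_\att$ (adding a channel only raises the chance some voter is reached), the attacker's best response is always $S_\att = C$ regardless of the defender, so the game collapses to $\min_{|S_\deff| \le k_\deff} f(S_\deff, C)$. Now set every $p_{uv} = 1$ and every $q_{uv} = 1$: then $f(S_\deff, C) = \sum_{v} \theta_v \,\mathbf{1}[\,v \text{ has no neighbor in } S_\deff\,]$, i.e.\ the defender must pick $k_\deff$ channels minimizing the total $\theta$-weight of \emph{uncovered} voters. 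This is exactly the complement of \textsc{Max $k$-Coverage}.

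Concretely, given a coverage instance with universe $W$, set family $\mathcal{S}$, and budget $k$, I would build one channel per set, one voter per element with $\theta_v = 1$, an edge whenever the element lies in the set, and take $k_\deff = k$, $k_\att = m$, and all $p,q$ equal to $1$. There is one catch: in the completeness case of Feige's reduction there are $k$ sets covering the whole universe, which would make the defender's optimum $0$, rendering a multiplicative ratio meaningless. I would patch this by appending a gadget: a block $U$ of $|U| = |W|$ extra voters (each with $\theta = 1$) and one extra channel $u^\star$ joined to all of $U$ only, with $p_{u^\star v} = 1$ and $q_{u^\star v} = 0$. The attacker's forced move $S_\att = C$ contains $u^\star$, so every voter in $U$ switches; and since $q_{u^\star v} = 0$ and $u^\star$ touches no other voter, an optimal defender never wastes budget on $u^\star$. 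Hence the defender's optimal value equals $|U| + \big(|W| - \mathrm{OPT}_{\mathrm{cov}}\big)$.

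The coverage gap now becomes a constant multiplicative gap for the defender. In the completeness case $\mathrm{OPT}_{\mathrm{cov}} = |W|$, so the defender value is $|U| = |W|$; in the soundness case $\mathrm{OPT}_{\mathrm{cov}} \le (1 - 1/e + \epsilon)|W|$, so the defender value is at least $|U| + (1/e - \epsilon)|W| = (1 + 1/e - \epsilon)|W|$. For small $\epsilon$ these differ by a fixed factor $c > 1$, so a polynomial-time $c$-approximation to the defender's value would decide an NP-hard problem. Finally, in this degenerate game the defender's guaranteed payoff is linear in their mixed strategy, so an optimal mixed strategy is supported on value-minimizing pure strategies and attains exactly the above minimum; given any approximately optimal mixed strategy one can evaluate its guaranteed payoff, so hardness of the value carries over to computing the strategy. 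This yields APX-hardness.

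The main obstacle, and essentially the only subtle point, is the mismatch in optimization direction: \textsc{Max Coverage} inapproximability is a \emph{maximization} statement, whereas the defender \emph{minimizes} the complementary uncovered weight, so one must both keep the defender's optimum bounded away from $0$ (via the uncoverable mass $U$) and choose its size ($|U| = |W|$ works) so that the additive coverage gap translates into a constant \emph{multiplicative} gap. A minor point to verify is that the attacker-collapse is legitimate --- that $k_\att = m$ is allowed (the model only upper-bounds budgets) and that monotonicity of $f$ in $S_\att$ really forces $S_\att = C$; this also shows the hardness persists even when the attacker has no meaningful decision.
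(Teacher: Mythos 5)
Your reduction is at its core the same one the paper uses: channels for sets, voters for elements, $\theta_v=1$, $p_{uv}=q_{uv}=1$, attacker budget large enough that $S_\att=C$ is a dominant response, defender budget $k$, so the game collapses to a coverage problem for the defender, and both arguments rest on the $(1-1/e)$ inapproximability of maximum coverage. The genuine difference is how the optimization direction is handled. The paper identifies the defender's problem with the \emph{maximization} ``maximize the number of voters saved,'' which is literally maximum coverage, and invokes hardness there; it does not address the fact that the defender's actual game value $\min_{\sigma_\deff}\max_{S_\att} f$ is a minimization whose optimum is $0$ in the completeness case, where multiplicative approximation is degenerate. Your padding gadget (a block $U$ with $|U|=|W|$ reachable only through an unprotectable channel $u^\star$ with $q_{u^\star v}=0$) is aimed precisely at that issue: it bounds the optimum away from zero and turns the additive coverage gap into a constant multiplicative gap, $|W|$ versus at least $(1+1/e-\epsilon)|W|$, for the game value itself --- which is the quantity the paper's bicriteria algorithm later approximates. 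So your version is a strictly more careful instantiation of the same idea. Two small repairs: once $u^\star$ is added the instance has $m+1$ channels, so the attacker's budget must be $m+1$ (not $m$) for the forced move $S_\att=C$; and you should discard elements contained in no set so that every remaining voter is reached by the attacker with probability $1$. Neither affects correctness.
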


The intuition is that the maximum coverage problem is essentially a special case of ours. However, diminishing returns provides useful algorithmic structure. Formally, both players' best response functions are closely related to submodular optimization problems. A set function is submodular if for all $A \subseteq B$ and $u \in V\setminus B$, $f(B \cup \{u\}) - f(B) \leq f(A \cup \{u\}) - f(A)$. We will only deal with monotone functions, where $f(A \cup \{u\}) - f(A) \geq 0$ holds for all $A, u$.  

Our overall approach is to work in the marginal space of the attacker, by keeping track of only the marginal probability that they select each channel. That is, the attacker's current mixed strategy is concisely represented by a fractional vector $x$, where $x_u$ gives the probability of selecting channel $u$. We run an approximate no-regret learning algorithm to update $x$ over a series of iterations. At each iteration $t$, $x$ is updated via a gradient step on a reward function induced by a set $S_\deff^t$ played by the defender. Specifically, we will choose $S_\deff^t$ to be an approximate best response to the current attacker mixed strategy. 

There are two principal challenges that must be solved to enable this approach. First, we need to design an appropriate no-regret algorithm for the attacker. This is a challenging task as the attacker's utility is no longer linear (or even concave) in the marginal vector $x$. Second, we need to compute approximate best responses for the defender, which itself is NP-hard.

We resolve the first challenge by running an online gradient algorithm for the attacker, where the continuous objective at each iteration is the \emph{multilinear extension} of an objective induced by the defender's strategy $S_\deff^t$. The multilinear extension is a fractional relaxation of a submodular set function. We define the multilinear extension $F(\cdot, S_\deff)$ induced by a defender strategy $S_\deff$ as 

\begingroup\makeatletter\def\f@size{9.5}\check@mathfonts
\begin{align*}
F(x|S_\deff) = \sum_{v \in V} \theta_v \left(\prod_{u \in S_\deff} 1 - q_{uv}\right) \left(1 - \prod_{u =1}^{m} 1 - x_u p_{uv}\right)
\end{align*}
\begingroup\makeatletter\def\f@size{10}\check@mathfonts
That is, $F(x|S_\deff)$ is the expected value of $f(S_\deff, S_\att)$ when each channel $u$ is independently included in $S_\att$ with probability $x_u$. This is a special case of the multilinear extension more generally defined for arbitrary submodular set functions \cite{calinescu2011maximizing}.

While $F$ is in general not concave, we show that gradient-ascent style algorithms enjoy a no-regret guarantee against a $\frac{1}{2}$-approximation of the optimal strategy in hindsight. Our general strategy is to analyze online mirror ascent for continuous submodular functions. By making specific choices for the mirror map, we obtain two concrete algorithms (the update rules in Algorithm \ref{alg:eg}). The first is standard online gradient ascent, which takes a gradient step followed by Euclidean projection onto the feasible set $\mathcal{X} = \{x| \sum_u x_u \leq k_\att, 0 \leq x \leq 1\}$. The second is an exponentiated gradient algorithm, which scales each entry of $x$ according to the gradient and then normalizes to enforce the budget constraint. We have the following convergence guarantees:

\begin{theorem} \label{theorem:oga}
	Suppose that we apply Algorithm \ref{alg:eg} to a sequence of multilinear extensions $F(\cdot|S_\deff^1)...F(\cdot|S_\deff^T)$. Let $b = \max_{|S_\deff| \leq k_\deff, u \in C}f(S_\deff, \{u\})$. Then, after $T$ iterations, we have that 
	\begin{align*}
	\frac{1}{2}\max_{x^* \in \mathcal{X}} \sum_{t = 1}^T F(x^*|S_\deff^t) - \sum_{t = 1}^T F(x^t|S_\deff^t)  \leq \sqrt{2}LD_{k_\att} \sqrt{T}.
	\end{align*}	
	where for the exponentiated gradient update, $L = b$ and $D_{k_\att} = k_\att \log(m)$ and for the Euclidean update, $L = b\sqrt{m}$ and $D_{k_\att} = \sqrt{k_{\att}}$. 
\end{theorem}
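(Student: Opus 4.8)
The plan is to read Algorithm~\ref{alg:eg} as online mirror ascent run on the \emph{linearizations} of the objectives $F(\cdot|S_\deff^t)$, and to show that the (in general unavoidable) loss incurred because these objectives are not concave is exactly the factor $\tfrac12$ that appears in the statement. How the sets $S_\deff^t$ arise is irrelevant here: I only use that for fixed $S_\deff$ the map $S_\att\mapsto f(S_\deff,S_\att)$ is monotone submodular (a nonnegative combination of probabilistic-coverage functions) with $f(S_\deff,\emptyset)=0$, so $F(\cdot|S_\deff)$ is its multilinear extension and is therefore monotone, smooth, multilinear, and continuous DR-submodular -- i.e.\ concave along every nonnegative direction and with $\nabla F(\cdot|S_\deff)$ nonincreasing in the coordinatewise order (its off-diagonal Hessian entries are the nonpositive quantities $\E_{R}[f(S_\deff,R\cup\{i,j\})-f(S_\deff,R\cup\{i\})-f(S_\deff,R\cup\{j\})+f(S_\deff,R)]$ and its diagonal entries vanish). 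I also record the gradient bound $\partial_u F(x|S_\deff)=\E_{R\sim x}[f(S_\deff,R\cup\{u\})-f(S_\deff,R\setminus\{u\})]\le f(S_\deff,\{u\})\le b$, so $\|\nabla F(x|S_\deff)\|_\infty\le b$ and hence $\|\nabla F(x|S_\deff)\|_2\le b\sqrt m$.

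The technical heart is the first-order inequality: for all $x,x^*\in\mathcal X$,
\[
F(x^*|S_\deff)-2F(x|S_\deff)\ \le\ \big\langle \nabla F(x|S_\deff),\, x^*-x\big\rangle .
\]
To prove it I split $x^*-x$ into its positive part $d^+=(x^*\vee x)-x\ge 0$ and its negative part $d^-=x-(x^*\wedge x)\ge 0$. Concavity of $F(\cdot|S_\deff)$ along $d^+$ gives $F(x\vee x^*|S_\deff)-F(x|S_\deff)\le\langle\nabla F(x|S_\deff),d^+\rangle$. For $d^-$, telescoping coordinate by coordinate from $x$ down to $x\wedge x^*$ and using multilinearity (so each one-coordinate increment is exactly $w_j\,\partial_j F$) together with the coordinatewise monotonicity of the gradient yields $\langle\nabla F(x|S_\deff),d^-\rangle\le F(x|S_\deff)-F(x\wedge x^*|S_\deff)\le F(x|S_\deff)$. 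Subtracting the two bounds and invoking monotonicity of $F$ ($F(x\vee x^*|S_\deff)\ge F(x^*|S_\deff)$ and $F(x\wedge x^*|S_\deff)\ge 0$) gives the claim.

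Summing the inequality over $t=1,\dots,T$ at $x=x^t$ with $F(\cdot|S_\deff^t)$, and writing $\nabla^t=\nabla F(x^t|S_\deff^t)$, for every $x^*\in\mathcal X$
\[
\tfrac12\sum_{t}F(x^*|S_\deff^t)-\sum_{t}F(x^t|S_\deff^t)\ \le\ \tfrac12\sum_{t}\big\langle\nabla^t,\,x^*-x^t\big\rangle .
\]
Maximizing over $x^*$ on the left (the term $\sum_t F(x^t|S_\deff^t)$ is fixed) bounds the left-hand side of the theorem by half of $\max_{x^*\in\mathcal X}\sum_t\langle\nabla^t,x^*-x^t\rangle$, which is exactly the regret of online mirror ascent on the linear rewards $x\mapsto\langle\nabla^t,x\rangle$ over $\mathcal X$. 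Both updates in Algorithm~\ref{alg:eg} are such instances: the Euclidean update is mirror ascent with the squared-$\ell_2$ map, and the exponentiated-gradient update is mirror ascent with the negative-entropy map on the ($k_\att$-scaled, capped) simplex. Plugging in the textbook mirror-ascent guarantee, $O\!\big(\tfrac{D_\psi}{\eta}+\tfrac{\eta}{2\sigma}\sum_t\|\nabla^t\|_*^2\big)$ optimized in $\eta$, with the dual-norm bound $\|\nabla^t\|_*\le L$ and the diameter/strong-convexity constants packaged as $D_{k_\att}$ -- namely $L=b\sqrt m,\ D_{k_\att}=\sqrt{k_\att}$ (Euclidean map, dual norm $\ell_2$) and $L=b,\ D_{k_\att}=k_\att\log m$ (entropic map, dual norm $\ell_\infty$) -- yields a regret of $2\sqrt2\,LD_{k_\att}\sqrt T$, and halving it gives the stated bound.

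I expect the second paragraph -- the $\tfrac12$ first-order inequality for the non-concave multilinear extension -- to be the main obstacle: it is where monotonicity and DR-submodularity have to be combined carefully, treating the increasing and decreasing coordinates of $x^*-x$ separately and using multilinearity to make the telescoping sum exact. The remaining ingredients (the partial-derivative bound, the maximization-over-$x^*$ rearrangement, and the standard mirror-ascent regret inequality for each of the two mirror maps with its associated norm and Bregman diameter) are routine.
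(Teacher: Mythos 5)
Your proposal is correct and follows essentially the same route as the paper: both arguments reduce the statement to (i) the DR-submodular first-order inequality $F(x^*|S_\deff)-2F(x|S_\deff)\le\langle\nabla F(x|S_\deff),x^*-x\rangle$, which contributes the factor $\tfrac12$, and (ii) a standard linear-regret bound for online mirror ascent instantiated with the Euclidean and negative-entropy mirror maps. The only difference is in which half is worked out versus cited -- the paper cites Hassani et al.\ for (i) and proves (ii) via a follow-the-leader/Bregman-divergence argument, whereas you prove (i) via the $d^+/d^-$ decomposition and cite the textbook regret bound for (ii) -- which does not change the substance of the proof.
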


Our proof builds on the fact that for any single continuous submodular function, any local optimum is a $\frac{1}{2}$-approximation to the global optimum and translates this into the online setting.  We remark that a no-regret guarantee for online gradient ascent for submodular functions was recently shown in \cite{chen2018online}. Our more general analysis based on mirror ascent gives their result as a special case, and also allows us to analyze the exponentiated gradient update. The advantage is that the theoretical convergence rate is substantially better for exponentiated gradient, reducing the dimension dependence from $O(\sqrt{m})$ to $O(\log m)$. However, we also include the result for online gradient ascent since it tends to perform better empirically. 

The second challenge is computing defender best responses. We show that the defender's best response problem is also closely related to a submodular maximization problem. Accordingly, we can compute approximate best responses via a greedy algorithm. Specifically, we show that the defender can obtain an $\epsilon$-approximation to the optimal best response when the greedy algorithm is given an expanded budget of $\ln\left(\frac{n}{\epsilon}\right)k_\deff$ nodes.

In more detail: fix an attacker mixed strategy, denoted as $\sigma_\att$. The defender best response problem is $
\min_{|S_\deff| \leq k_\deff} \E_{S_\att \sim \sigma_\att} \left[f(S_\deff, S_\att)\right].$ That is, we wish to minimize the number of voters who switch their vote, in expectation over $\sigma_\att$. We consider the following equivalent problem 

\begin{align*}
\max_{|S_\deff| \leq k_\deff} \E_{S_\att \sim \sigma_\att} \left[f(\emptyset, S_\att) - f(S_\deff, S_\att)\right],
\end{align*}
i.e., maximizing the number of voters who do not switch as a result of the defender's action. Define $g(S_\deff|\sigma_\att) = \E_{S_\att \sim \sigma_\att} \left[f(\emptyset, S_\att) - f(S_\deff, S_\att)\right]$. The key observation enabling efficient best response computations is the following:

\begin{lemma}
	For any attacker mixed strategy $\sigma_\att$, $g(\cdot| \sigma_\att)$ is a monotone submodular function. 
\end{lemma}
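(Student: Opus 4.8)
The plan is to unpack $g(S_\deff|\sigma_\att)$ into a nonnegative combination of set functions $g(\cdot|S_\att)$ for fixed attacker pure strategies $S_\att$, then argue that each $g(\cdot|S_\att)$ is monotone submodular; since monotone submodularity is preserved under nonnegative linear combinations (in particular under expectation over $\sigma_\att$), this suffices. So first I would write $g(S_\deff|\sigma_\att) = \E_{S_\att \sim \sigma_\att}\left[\sum_{v \in V} \theta_v \left(1 - \prod_{u \in S_\att} (1 - p_{uv})\right)\left(1 - \prod_{u \in S_\deff}(1 - q_{uv})\right)\right]$, obtained by subtracting the formula for $f(S_\deff,S_\att)$ from $f(\emptyset,S_\att)$, and note that for fixed $S_\att$ the coefficient $c_v := \theta_v\left(1 - \prod_{u \in S_\att}(1-p_{uv})\right) \ge 0$ is a constant. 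Thus it remains to handle, for each voter $v$, the function $S_\deff \mapsto 1 - \prod_{u \in S_\deff}(1 - q_{uv})$.

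Next I would show this per-voter function is monotone submodular. Monotonicity is immediate: adding a channel $u$ to $S_\deff$ multiplies the product $\prod_{u' \in S_\deff}(1 - q_{u'v})$ by a factor in $[0,1]$, so the product does not increase and $1$ minus it does not decrease. For submodularity, fix $A \subseteq B$ and $u \notin B$, and compute the marginal gain of adding $u$: it equals $q_{uv}\prod_{u' \in S_\deff}(1 - q_{u'v})$ with $S_\deff$ the current set. Since $\prod_{u' \in A}(1 - q_{u'v}) \ge \prod_{u' \in B}(1 - q_{u'v})$ (the $B$ product has extra factors in $[0,1]$), the marginal gain at $A$ is at least the marginal gain at $B$, which is exactly the submodularity inequality. (Equivalently, one can recognize $1 - \prod_{u \in S}(1 - q_{uv})$ as the probability that at least one element of $S$ ``fires'' under independent Bernoulli$(q_{uv})$ trials — a standard coverage-type submodular function.)

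Finally I would assemble the pieces: $g(S_\deff|\sigma_\att) = \E_{S_\att \sim \sigma_\att}\left[\sum_{v \in V} c_v(S_\att)\, h_v(S_\deff)\right]$ where each $h_v$ is monotone submodular and each $c_v(S_\att) \ge 0$. A finite nonnegative combination of monotone submodular functions is monotone submodular, and expectation over $\sigma_\att$ (a distribution over finitely many pure strategies) is such a combination, so $g(\cdot|\sigma_\att)$ is monotone submodular. I do not expect a serious obstacle here; the only care needed is bookkeeping — making sure the sign works out after the subtraction $f(\emptyset,\cdot) - f(S_\deff,\cdot)$ so that the coefficients are genuinely nonnegative, and invoking closure of monotone submodular functions under nonnegative linear combinations and expectations. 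The mild subtlety worth stating explicitly is why the product-complement function is submodular rather than merely monotone, which the marginal-gain computation above settles cleanly.
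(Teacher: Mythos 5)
Your proof is correct and follows essentially the same route as the paper's: rewrite $g$ as an expectation over $S_\att$ of a sum over voters of nonnegative coefficients times the coverage-type function $1 - \prod_{u \in S_\deff}(1 - q_{uv})$, and invoke closure of monotone submodularity under nonnegative combinations. The only difference is that you spell out the marginal-gain computation for the per-voter function, which the paper leaves as ``easy to see.''
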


Accordingly, we can compute $\epsilon$-optimal best responses by running the greedy algorithm with an expanded budget:

\begin{theorem}
	Running the greedy algorithm on the function $g$ with a budget of $\ln\left(\frac{n}{\epsilon}\right) k_\deff$ outputs a set $S_\deff$ satisfying $
	\E_{S_\att \sim \sigma_\att}\left[f(S_\deff, S_\att)\right] \leq \min_{|S^*| \leq k_\deff} \E_{S_\att \sim \sigma_\att}\left[f(S_\deff, S_\att)\right] + \epsilon.
	$
\end{theorem}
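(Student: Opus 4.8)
The plan is to combine the standard bicriteria guarantee for the greedy algorithm on monotone submodular functions with the algebraic relationship between $g$ and the defender's true objective. By the preceding lemma, $g(\cdot|\sigma_\att)$ is monotone and submodular, and $g(\emptyset|\sigma_\att) = \E_{S_\att \sim \sigma_\att}[f(\emptyset,S_\att) - f(\emptyset,S_\att)] = 0$, so the classical analysis of greedy (Nemhauser--Wolsey--Fisher) applies: after selecting $r$ elements greedily, the returned set $S_\deff$ satisfies $g(S_\deff|\sigma_\att) \geq \left(1 - e^{-r/k_\deff}\right)\max_{|S^*| \leq k_\deff} g(S^*|\sigma_\att)$. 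Setting $r = \ln(n/\epsilon)\, k_\deff$ makes $e^{-r/k_\deff} = \epsilon/n$, so $g(S_\deff|\sigma_\att) \geq \left(1 - \tfrac{\epsilon}{n}\right)\mathrm{OPT}$, where $\mathrm{OPT} := \max_{|S^*| \leq k_\deff} g(S^*|\sigma_\att)$.

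Next I would recall the chain of equalities that turns this bound on $g$ into a bound on the defender objective. Since $g(S|\sigma_\att) = \E_{S_\att}[f(\emptyset, S_\att) - f(S, S_\att)]$ and $\E_{S_\att}[f(\emptyset, S_\att)]$ is a constant independent of $S$, we have $\mathrm{OPT} = \E_{S_\att}[f(\emptyset, S_\att)] - \min_{|S^*| \leq k_\deff}\E_{S_\att}[f(S^*, S_\att)]$ and likewise $\E_{S_\att}[f(S_\deff, S_\att)] = \E_{S_\att}[f(\emptyset, S_\att)] - g(S_\deff|\sigma_\att)$. Substituting the greedy guarantee gives $\E_{S_\att}[f(S_\deff, S_\att)] \leq \min_{|S^*| \leq k_\deff}\E_{S_\att}[f(S^*, S_\att)] + \tfrac{\epsilon}{n}\mathrm{OPT}$.

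It then remains to bound the slack term $\tfrac{\epsilon}{n}\mathrm{OPT}$. Because $f(S^*, S_\att) \geq 0$ pointwise, $\mathrm{OPT} \leq \E_{S_\att}[f(\emptyset, S_\att)]$, and $f(\emptyset, S_\att) = \sum_{v \in V}\theta_v\left(1 - \prod_{u \in S_\att} 1 - p_{uv}\right) \leq \sum_{v\in V}\theta_v \leq n$, since each summand lies in $[0,1]$ and $\theta_v \in \{0,1\}$. Hence $\mathrm{OPT} \leq n$ and $\tfrac{\epsilon}{n}\mathrm{OPT} \leq \epsilon$, yielding the claimed inequality.

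The work here is careful bookkeeping rather than a deep obstacle: the crucial points are that the bicriteria bound is applied to the \emph{normalized} monotone submodular function $g$ (valid since $g(\emptyset|\sigma_\att)=0$), and that a multiplicative $(1-\epsilon/n)$ error on $g$ becomes an additive $\epsilon$ error on $f$ only because $\mathrm{OPT}$ is bounded by $n$. One subtlety worth noting is that greedy requires evaluating marginal gains of $g$, which are expectations over $\sigma_\att$; since $\sigma_\att$ is the finitely supported mixed strategy maintained by the learning dynamics, these can be computed exactly, so no additional sampling error enters the analysis.
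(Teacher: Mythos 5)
Your proposal is correct and follows essentially the same route as the paper's proof: apply the bicriteria greedy guarantee to the monotone submodular function $g$ with the enlarged budget to get $g(S_\deff) \geq (1-\epsilon/n)\,g(S^*)$, translate back to $f$ using the fact that $g$ differs from $-\E[f(\cdot,S_\att)]$ by the constant $\E[f(\emptyset,S_\att)]$, and absorb the multiplicative slack into an additive $\epsilon$ via $g(S^*) \leq n$. The only cosmetic difference is that you invoke the classical Nemhauser--Wolsey--Fisher bound where the paper re-derives it as a standalone lemma; your closing observation about exact evaluation of $g$ over a finitely supported $\sigma_\att$ is a correct and welcome extra detail.
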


Note that running greedy with the original budget $k_\deff$ would give a $(1 - 1/e)$ approximation for the function $g$. However, a constant factor approximation for maximizing $g$ may not translate into any approximation for  minimizing $f$ because of the constant term $f(\emptyset, S_\att)$ in the definition of $g$. Expanding the budget by a logarithmic factor gives a $1 - \epsilon$ approximation with respect to $g$, and when $\epsilon$ is small enough the guarantee can be translated back in terms of $f$. 

Combining the no-regret guarantee for the attacker and the best response approximation guarantee for the defender yields the following guarantee for the sequence of sets ${S_\deff^t}$: 

%


\begin{theorem} \label{theorem:nondisjoint}
	After $T$ iterations, let $\hat{\sigma}_T$ be the uniform distribution on $S_\deff^1...S_\deff^T$ output by Algorithm \ref{alg:eg}. The defender's payoff using $\hat{\sigma}_T$ is bounded as
	
	\begin{align*}
	\max_{|S_\att| \leq k_\att} \E_{S_{\deff} \sim \hat{\sigma}_T} \left[f(S_\deff, S_\att)\right] \leq 2\left(\tau + \epsilon + \frac{\sqrt{2}LD}{\sqrt{T}}\right).
	\end{align*}
	
\end{theorem}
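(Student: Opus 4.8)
\emph{Proof approach.} The plan is to run the standard argument that no-regret dynamics converge to an approximate minimax solution of a zero-sum game, while accounting for two features of our setting: the attacker's learning algorithm only enjoys a \emph{$\tfrac{1}{2}$-approximate} no-regret guarantee (Theorem~\ref{theorem:oga}), and at each step the defender only plays an \emph{approximate} best response (greedy with the expanded budget $\ln(\tfrac{n}{\epsilon})k_\deff$). The factor $\tfrac{1}{2}$ is exactly what produces the leading $2$, and the additive $\epsilon$ is inherited verbatim from the best-response guarantee; the constants $L$ and $D$ are those of Theorem~\ref{theorem:oga} (so $D=D_{k_\att}$), depending on the update rule used. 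Throughout, let $\tau$ denote the value of the game, $\tau=\min_{\sigma_\deff}\max_{|S_\att|\le k_\att}\E_{S_\deff\sim\sigma_\deff}[f(S_\deff,S_\att)]$, measured with the true attacker budget $k_\att$, and let $\sigma_\deff^\star$ be a defender strategy attaining it.

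First I would reduce the left-hand side to a statement about the fractional iterates. Substituting $x=1[S_\att]$ into the definition of $F$ gives $f(S_\deff,S_\att)=F(1[S_\att]\mid S_\deff)$ for every $S_\att$, and $1[S_\att]\in\mathcal{X}$ whenever $|S_\att|\le k_\att$. Since $\E_{S_\deff\sim\hat\sigma_T}[f(S_\deff,S_\att)]=\tfrac{1}{T}\sum_{t=1}^{T}f(S_\deff^t,S_\att)$, this yields
\[
\max_{|S_\att|\le k_\att}\E_{S_\deff\sim\hat\sigma_T}[f(S_\deff,S_\att)]\;\le\;\frac{1}{T}\max_{x^*\in\mathcal{X}}\sum_{t=1}^{T}F(x^*\mid S_\deff^t).
\]
The iterates $x^t$ and the sets $S_\deff^t$ are precisely those produced by Algorithm~\ref{alg:eg}, so Theorem~\ref{theorem:oga} applies to the sequence $F(\cdot\mid S_\deff^1),\dots,F(\cdot\mid S_\deff^T)$; rearranging its conclusion gives
\[
\frac{1}{T}\max_{x^*\in\mathcal{X}}\sum_{t=1}^{T}F(x^*\mid S_\deff^t)\;\le\;\frac{2}{T}\sum_{t=1}^{T}F(x^t\mid S_\deff^t)+\frac{2\sqrt{2}\,LD}{\sqrt{T}}.
\]

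It then remains to show $\tfrac{1}{T}\sum_{t}F(x^t\mid S_\deff^t)\le\tau+\epsilon$, for which I would prove the per-iteration bound $F(x^t\mid S_\deff^t)\le\tau+\epsilon$. By construction $S_\deff^t$ is the greedy approximate best response against the product distribution $\sigma_\att^t$ whose marginals are $x^t$, and $\E_{S_\att\sim\sigma_\att^t}[f(S_\deff,S_\att)]=F(x^t\mid S_\deff)$ by definition of the multilinear extension, so the best-response guarantee gives $F(x^t\mid S_\deff^t)\le\min_{|S^*|\le k_\deff}F(x^t\mid S^*)+\epsilon$. Writing $F(x\mid\sigma_\deff)=\E_{S_\deff\sim\sigma_\deff}F(x\mid S_\deff)$ for the mixed extension, I would then chain $\min_{|S^*|\le k_\deff}F(x^t\mid S^*)\le F(x^t\mid\sigma_\deff^\star)\le\max_{x\in\mathcal{X}}F(x\mid\sigma_\deff^\star)$; since $F(\cdot\mid\sigma_\deff^\star)$ is multilinear, hence affine along every line, its maximum over the polytope $\mathcal{X}$ is attained at a $0/1$ extreme point with at most $k_\att$ ones, and therefore $\max_{x\in\mathcal{X}}F(x\mid\sigma_\deff^\star)=\max_{|S_\att|\le k_\att}\E_{S_\deff\sim\sigma_\deff^\star}[f(S_\deff,S_\att)]=\tau$. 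Combining the three displayed bounds and simplifying $2(\tau+\epsilon)+\tfrac{2\sqrt{2}\,LD}{\sqrt{T}}=2\big(\tau+\epsilon+\tfrac{\sqrt{2}\,LD}{\sqrt{T}}\big)$ finishes the proof. (Recall $\hat\sigma_T$ is supported on sets of size $\ln(\tfrac{n}{\epsilon})k_\deff$, so this is a bicriteria guarantee, with $\tau$ in the comparison still defined using the true budget $k_\deff$.)

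I expect the one genuinely delicate point to be the final step above: because $x^t$ is a fractional point of $\mathcal{X}$, the product distribution $\sigma_\att^t$ it induces may place mass on attacker sets of size larger than $k_\att$, so $x^t$ is not literally a feasible attacker mixed strategy and one cannot directly bound the defender's loss against it by the game value. The multilinearity/vertex argument is exactly what reconciles the relaxation to marginal probabilities with the original game and legitimizes $F(x^t\mid\sigma_\deff^\star)\le\tau$. Everything else — propagating the factor $2$ out of Theorem~\ref{theorem:oga} and carrying the additive $\epsilon$ from the greedy step — is routine bookkeeping.
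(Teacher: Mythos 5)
Your overall argument is the paper's own proof run in the opposite direction: the paper lower-bounds $\tau$ by chaining (i) an integrality lemma relating $\max_{|S_\att|\le k_\att}$ to the expectation under the product distribution with marginals $x^t$, (ii) the greedy best-response guarantee, and (iii) the $\tfrac12$-approximate regret bound of Theorem~\ref{theorem:oga}, and then rearranges; you upper-bound the left-hand side by chaining the same three facts (with the min/average swap replaced by a per-iteration bound $F(x^t\mid S_\deff^t)\le\tau+\epsilon$, which is equivalent). Steps (ii) and (iii) and all the constant-tracking are handled correctly, and you are right that the best-response guarantee applies to the product distribution induced by $x^t$ even though that distribution can place mass on attacker sets of size larger than $k_\att$.

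The one genuine gap sits exactly at the point you flag as delicate. To establish $\max_{x\in\mathcal{X}}F(x\mid\sigma_\deff^\star)\le\tau$ you assert that $F(\cdot\mid\sigma_\deff^\star)$, being multilinear, is ``affine along every line'' and therefore maximized at a $0/1$ vertex of $\mathcal{X}$. Multilinear functions are affine only along coordinate-parallel lines, and multilinearity alone does not force a vertex maximum over a polytope with a coupling constraint such as $\sum_u x_u\le k_\att$: the multilinear function $x_1x_2$ over $\{x_1+x_2\le 1,\ x\ge 0\}$ attains its maximum at $(1/2,1/2)$, not at a vertex. What rescues the claim here is that $f(S_\deff,\cdot)$ is monotone submodular in $S_\att$ for every fixed $S_\deff$ (hence so is its mixture under $\sigma_\deff^\star$), so its multilinear extension is convex along every direction $e_i-e_j$ and can be rounded losslessly over the matroid polytope $\mathcal{X}$ via pipage or swap rounding. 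That is precisely the paper's Lemma~\ref{lemma:integrality}, which you should invoke (or reprove from submodularity) in place of the vertex argument; with that substitution the rest of your proof goes through as written.
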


Now, if we take $T = \left(\frac{4\sqrt{2}LD}{\epsilon}\right)^2$ and run greedy with $\epsilon' = \frac{\epsilon}{4}$, we obtain that $\hat{\sigma}_T$ is a 2-approximation Nash equilibrium strategy for the defender up to additive loss $\epsilon$, using a budget of $\left(\ln \left(\frac{n}{\epsilon}\right) + O(1)\right)k_\deff$. Each iteration takes time $O(nm +  m\log m + mn\alpha k)$ where the first term is to compute the attacker's gradient, the second to project onto their feasible strategy set, and the third is to run greedy for the defender (see the supplement for details).

\section{Preference uncertainty}

The previous two sections showed how to compute approximately optimal equilibrium strategies for the defender when both players know the starting preferences of the voters exactly. However, in practice the preferences will be subject to uncertainty, complicating the problem of optimally targeting resources. We now explore three models of preference uncertainty, each of which makes an increasingly conservative assumption about the information available to the defender. In each case, we show how to extend our algorithmic techniques to obtain approximately optimal defender strategies. 

\subsection{Stochastic uncertainty}

We start with the least conservative assumption that the joint preference profile of the voters is drawn from a distribution which is known to both players. Each aims to maximize their payoff in expectation over the unknown draw from this distribution. We show that in both the disjoint and nondisjoint settings, the same algorithmic techniques go through with a natural modification to account for uncertainty. 


Recall that $\theta$ denotes the voter preferences. $\theta$ is now drawn from a known joint distribution $D$. Let $f_\theta(S_\deff, S_\att)$ denote the expected number of voters who switch to $c_\att$ under preferences $\theta$. The payoffs are given by $\E_{\theta\sim D}[f_\theta(S_\deff, S_\att)]$. Via linearity of expectation, we can write this as
\begin{align*}
\sum_{v \in V} \Pr[\theta_v = 1] \prod_{u \in S_\deff} \left(1 - q_{uv}\right) \left(1 - \prod_{u \in S_\att} 1 - p_{uv}\right).
\end{align*}

\begin{algorithm}
	\caption{FPLT-Asymmetric($\epsilon$)}\label{alg:ftpl-asymmetric}
	\begin{algorithmic}[1] 
		\State Arbitrarily initialize $S^0_\deff$ and $S^0_\att(\theta_j)$
		\For{$t = t...T$}
		\State Draw $p_\att^j, p_\deff$ uniformly at random from $[0, \frac{1}{\epsilon}]^{m}$
		\State //TopK returns the set consisting of the indices of the smallest $k$ entries of the given vector
		\State $S_\att^t(\theta_j) = TopK(\sum_{s  =1}^{t-1} \ell_{\theta_j}(S_\deff^s) + p_\att^j, k_\att) \, j =1...N$ 
		\State $S_\deff^t = TopK(\sum_{s  =1}^{t-1} \frac{1}{N} \sum_{j = 1}^N \ell_{\theta_j}(S_\att^s(\theta_j)) + p_\deff, k_\deff)$
		\EndFor
		\State \Return $\{S_\att^t\}$ and $\{S_\deff^t\}$
	\end{algorithmic}
\end{algorithm}

\begin{algorithm}
	\caption{OG-Asymmetric$\left(\eta, \alpha, T, k_\att N\right)$}\label{alg:ega}
	\begin{algorithmic}[1] 
		\State Draw $\theta_1...\theta_N$ iid from $D$
		\State $x^0_i(\theta_j) = \frac{1}{mk_\att}$ for $i = 1...m$, $j = 1...N$
		\For{$t = 1...T$}
		\State $S_\deff^t$ = Greedy$\left(\frac{1}{N}\sum_{j = 1}^N g(\cdot | x^{t-1}(\theta_j)), \alpha k_\deff\right)$
		\For{$j  =1...N$}
		\State $\nabla^t(\theta_j) = \nabla F(x^{t-1}(\theta_j)|S_\deff^t)$
		\State $x^{t+1}(\theta_j) =$ Update($x^t(\theta_j)$, $\nabla^t(\theta_j)$)
		\EndFor
		\EndFor
		\State \Return $\{S_\deff^t\}$
	\end{algorithmic}
\end{algorithm}

Dependence on the random preferences appears only through the term $\Pr[\theta_v = 1]$. This has two important consequences. First, we can evaluate the objective and implement the corresponding algorithms using access only to the marginals of the distribution. For many distributions of interest (e.g., product distributions where each voter adopts a preference independently), these will be known explicitly, and they can in general be evaluated to arbitrary precision via random sampling. Second, since the probability term is a nonnegative constant with respect to the strategies $S_\deff$ and $S_\att$, the payoffs retain properties such as linearity (in the disjoint case) or submodularity (in the nondisjoint case). Accordingly, we can obtain exactly the same computational guarantees as in the deterministic case, merely substituting the above expression for the payoffs:

\begin{theorem}
	By substituting $\Pr[\theta_v = 1]$ for $\theta_v$ in the definition of $f$, FTPL achieves the same guarantee for the stochastic objective as in Theorem \ref{theorem:disjoint}. Further, making this substitution in the definition of $F(x|S_\deff)$ and running Algorithm \ref{alg:eg} yields the same guarantee as in Theorem \ref{theorem:nondisjoint}.
\end{theorem}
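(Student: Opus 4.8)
The plan is to argue that, after the substitution, the stochastic game is literally an instance of the deterministic game already analyzed, so that the earlier theorems apply with no essential change. First I would apply linearity of expectation to $\E_{\theta\sim D}[f_\theta(S_\deff,S_\att)]$ to obtain the displayed identity $\E_{\theta\sim D}[f_\theta(S_\deff,S_\att)]=\sum_{v\in V}\Pr[\theta_v=1]\left(\prod_{u\in S_\deff}1-q_{uv}\right)\left(1-\prod_{u\in S_\att}1-p_{uv}\right)$, and observe that the right-hand side is exactly $f(S_\deff,S_\att)$ with each $\theta_v$ replaced by $\Pr[\theta_v=1]$; write $\tilde f$ for this function. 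The single fact that drives everything is that $\Pr[\theta_v=1]\in[0,1]$ --- the same range as $\theta_v\in\{0,1\}$ --- and that it is a nonnegative constant with respect to both $S_\deff$ and $S_\att$. From here the argument splits into checking that each ingredient of Theorems~\ref{theorem:disjoint} and~\ref{theorem:nondisjoint} used only such properties of $\theta$.

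For the disjoint case I would re-run the rewriting from the Disjoint populations section with $\Pr[\theta_v=1]$ in place of $\theta_v$: this yields $\tilde f(S_\deff,S_\att)=\sum_{u\in S_\att}\sum_{v\in V_u}\Pr[\theta_v=1]p_{uv}-1[S_\deff]^\top\tilde\ell(S_\att)$ with coordinate $u$ of $\tilde\ell(S_\att)$ equal to $1[u\in S_\att]\sum_{v\in V_u}\Pr[\theta_v=1]p_{uv}q_{uv}$, and symmetrically for the attacker; the derivation only uses that $\theta_v$ is a nonnegative scalar, so it is valid verbatim. The analysis behind Theorem~\ref{theorem:disjoint} needs only (i) that each player's per-round loss is linear in their indicator vector, which is preserved, and (ii) magnitude bounds $\|\tilde\ell(\cdot)\|_1\le n$ and $\tilde f\le n$, which still hold since $\Pr[\theta_v=1]\le 1$ and therefore reproduce exactly the constants in the iteration count $\frac{4n^2\max\{k_a,k_d\}}{\epsilon^2}$. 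Hence the same number of FTPL iterations certifies an $\epsilon$-equilibrium for the stochastic payoffs. If only sample access to $D$ is available, I would estimate each marginal to precision $O(\epsilon/n)$ and absorb the error into $\epsilon$.

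For the nondisjoint case I would verify the two structural facts the analysis relies on. First, $\tilde F(x|S_\deff)$ --- the result of substituting $\Pr[\theta_v=1]$ into $F(x|S_\deff)$ --- is still the multilinear extension of the set function $S_\att\mapsto\tilde f(S_\deff,S_\att)$, which remains monotone submodular because it is a nonnegative combination (with weights $\Pr[\theta_v=1]$) of the per-voter coverage-type functions, each monotone submodular, and monotone submodularity is closed under nonnegative combinations; likewise the proof that $g(\cdot|\sigma_\att)$ is monotone submodular carries over word for word to $\tilde g(S_\deff|\sigma_\att)=\E_{S_\att\sim\sigma_\att}[\tilde f(\emptyset,S_\att)-\tilde f(S_\deff,S_\att)]$. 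Consequently Theorem~\ref{theorem:oga} applies to the sequence $\tilde F(\cdot|S_\deff^1),\dots,\tilde F(\cdot|S_\deff^T)$ with the same $L$ and $D$, since $b=\max_{|S_\deff|\le k_\deff,u}\tilde f(S_\deff,\{u\})\le n$ as before; the greedy best-response theorem applies to $\tilde g$ because its additive-to-multiplicative translation only needs $\tilde f(\emptyset,\cdot)\le n$; and combining the two exactly as in the proof of Theorem~\ref{theorem:nondisjoint} yields the identical bound.

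I do not anticipate a genuine obstacle: the whole content of the statement is that none of the earlier proofs used anything about $\theta_v$ beyond its being a nonnegative scalar bounded by $1$. The only point requiring real care is to track every quantitative constant --- the $n^2$ in the disjoint iteration count, the range parameter $b$, and the $f(\emptyset,\cdot)\le n$ appearing in the budget-expansion argument --- and confirm that each is bounded exactly as before after the substitution, which holds precisely because probabilities and $\{0,1\}$-valued preferences share the interval $[0,1]$.
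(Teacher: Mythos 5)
Your proposal is correct and follows essentially the same route as the paper: the paper's argument is precisely that the expected payoff depends on $\theta$ only through the nonnegative constants $\Pr[\theta_v=1]\in[0,1]$, so linearity (disjoint) and monotone submodularity (nondisjoint) are preserved and the earlier theorems apply verbatim. Your additional care in tracking the quantitative constants ($n^2$ in the iteration count, the bound $b\le n$, and $f(\emptyset,\cdot)\le n$) is a sound elaboration of the same argument rather than a different approach.
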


\subsection{Asymmetric uncertainty}

We now consider a case where the true voter preferences are still drawn from a distribution, but the players have access to asymmetric information about the draw. Specifically, the defender knows only the prior distribution, while the attacker has access to the true realized draw. We aim to solve the defender problem:

\begin{align} \label{problem:asymmetric}
\min_{\sigma_\deff} \E_{\theta \sim D}\left[\max_{|S_\att| \leq k_\att } \E_{S_\deff \sim \sigma_\deff}\left[f_\theta\left(S_\att, S_\deff\right)\right]\right].
\end{align}
Here, the defender minimizes in expectation over the distribution of voter preferences, but the attacker maximizes knowing the actual draw $\theta \sim D$. We show how to compute approximately optimal defender strategies for an arbitrary distribution $D$, assuming only the ability to draw i.i.d.\ samples. We first prove a concentration bound for the number of samples required to approximate the true problem over defender mixed strategies with bounded support:

\begin{lemma} \label{lemma:concentration}
	Draw $N = O\left(\frac{n^2 m T}{\epsilon^2}\log\left(\frac{1}{\delta}\right) \log m\right)$ samples. With probability at least $1-\delta$, for defender mixed strategy $\sigma_d$ with support size at most $T$,
	
	\begin{align*}
	\Big|\E_{\theta \sim D}\Big[\max_{|S_\att| \leq k_\att } &\E_{S_\deff \sim \sigma_\deff}\left[f_\theta\left(S_\att, S_\deff\right)\right]\Big] - \\&\frac{1}{N} \sum_{i = 1}^N\max_{|S_\att| \leq k_\att } \E_{S_\deff \sim \sigma_\deff}\left[f_\theta\left(S_\att, S_\deff\right)\right]\Big| \leq \epsilon
	\end{align*}
	
\end{lemma}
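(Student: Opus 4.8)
The plan is to apply a uniform convergence argument over defender mixed strategies of support size at most $T$. Fix such a strategy $\sigma_\deff$, supported on sets $S_\deff^1, \dots, S_\deff^T$ with weights $w_1, \dots, w_T$. For a fixed preference profile $\theta$, the inner quantity is $h(\theta) := \max_{|S_\att| \le k_\att} \sum_{t=1}^T w_t f_\theta(S_\deff^t, S_\att)$. Since $0 \le f_\theta(S_\deff^t, S_\att) \le n$ and the $w_t$ form a distribution, $h(\theta) \in [0, n]$ deterministically. So for a \emph{fixed} $\sigma_\deff$, drawing $\theta_1, \dots, \theta_N$ i.i.d.\ from $D$ and applying Hoeffding's inequality gives $\big|\E_{\theta \sim D}[h(\theta)] - \frac{1}{N}\sum_{i=1}^N h(\theta_i)\big| \le \epsilon$ with probability at least $1 - 2\exp(-2N\epsilon^2/n^2)$.

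The remaining work is to upgrade this pointwise bound to hold simultaneously over all $\sigma_\deff$ with support size $\le T$. First I would reduce the relevant parameter space: the maximizing attacker response $S_\att$ only depends on the weights $w$ and the values $f_\theta(S_\deff^t, \cdot)$, so it suffices to control how $h(\theta)$, as a function of the weights $w \in \Delta_{T-1}$ and the support sets, varies. The key observation is that $h$ is $1$-Lipschitz in $w$ with respect to $\|\cdot\|_1$ (scaled by $n$, the bound on $f_\theta$), uniformly over $\theta$: changing $w$ to $w'$ changes $\sum_t w_t f_\theta(S_\deff^t, S_\att)$ by at most $n\|w - w'\|_1$ for every fixed $S_\att$, hence changes the max by at most $n\|w-w'\|_1$. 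Therefore I would take a $\gamma$-net over the weight simplex $\Delta_{T-1}$ in $\ell_1$, which has size $(C/\gamma)^T$ for an absolute constant $C$, and a union bound over all $\binom{m}{k_\deff}^T \le m^{k_\att T}$ (loosely, $m^{O(T)}$, using $k \le m$) choices of support sets. Taking $\gamma = \epsilon/(2n)$ and a union bound over the net, the Hoeffding failure probability gets multiplied by roughly $m^{O(T)} (n/\epsilon)^{O(T)}$; requiring this to be at most $\delta$ forces $N = \Omega\big(\frac{n^2}{\epsilon^2}(T \log m + T\log(n/\epsilon) + \log(1/\delta))\big)$, which is absorbed into the stated bound $N = O\big(\frac{n^2 m T}{\epsilon^2}\log(1/\delta)\log m\big)$ with room to spare. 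On the net, the pointwise bound holds for every net point, and Lipschitz continuity in $w$ (plus the fact that every $\sigma_\deff$ of support $\le T$ has its weight vector within $\gamma$ of a net point with the same support sets) transfers it to all of the strategy class, at the cost of an extra $2n\gamma = \epsilon$ — so rescaling $\epsilon$ by a constant finishes.

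The main obstacle is handling the inner maximum correctly: because the attacker gets to see $\theta$ and best-respond, $h(\theta)$ is a supremum of (finitely many, but exponentially so) functions, and naively one might try to net or union-bound over attacker strategies too. The point I would emphasize is that this is unnecessary for concentration — for each fixed $(\sigma_\deff, \theta)$ the attacker's max is just a number in $[0,n]$, so Hoeffding applies directly once $\sigma_\deff$ is fixed; the attacker's strategy space only needs to be controlled to the extent that it makes $h(\cdot)$ a well-defined bounded function, which it is. The only genuine covering argument is over the \emph{defender's} mixed strategies, and there the support-size restriction $T$ is exactly what makes the covering number manageable ($m^{O(T)}(n/\epsilon)^{O(T)}$ rather than doubly exponential). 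I would also double-check the $\ell_1$-Lipschitz constant and the $\gamma$-net cardinality for $\Delta_{T-1}$ to make sure the stated sample complexity (linear in $T$ up to the $\log$ factors, and the extra factor of $m$ in the paper's bound gives slack) indeed dominates.
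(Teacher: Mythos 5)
Your proposal is correct, and its core is the same as the paper's: for a fixed $\sigma_\deff$ the quantity $\max_{|S_\att|\leq k_\att}\E_{S_\deff\sim\sigma_\deff}[f_\theta(S_\att,S_\deff)]$ is a single bounded random variable in $[0,n]$, so Hoeffding applies without any union bound over attacker strategies, and the only covering needed is over the defender's strategy class. Where you genuinely diverge is in how that class is covered. The paper simply asserts that there are at most $\binom{m}{k_\deff}^T$ distributions of support size at most $T$ and union-bounds over them; taken literally this is false, since the weights range over a continuum, and it is only valid if one restricts to \emph{uniform} distributions over multisets of $T$ pure strategies (which is in fact all that the downstream theorem needs, since the algorithm outputs $\hat{\sigma}_T$ uniform on $S_\deff^1\ldots S_\deff^T$, plus the single strategy $\sigma^*$). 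Your $\gamma$-net over the weight simplex together with the $n\|w-w'\|_1$-Lipschitz bound on the attacker's best-response value closes exactly this gap and proves the lemma as literally stated, for arbitrary weights. The price is an extra $T\log(n/\epsilon)$ term in the exponent of the union bound, which contributes an additive $O\!\left(\frac{n^2 T}{\epsilon^2}\log\frac{n}{\epsilon}\right)$ to $N$; this is not literally dominated by the stated $O\!\left(\frac{n^2 m T}{\epsilon^2}\log\frac{1}{\delta}\log m\right)$ in every parameter regime (e.g.\ when $\epsilon$ is extremely small or $n\gg m$), so strictly speaking your version needs that extra polylogarithmic term appended, but this is at the level of slack the paper itself already tolerates (its main-text and appendix statements of the lemma disagree on whether the bound reads $n^2 m T\log m$ or $n^3 T\log n$). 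Also, a trivial typo: your bound on the number of support-set choices should read $m^{k_\deff T}$, not $m^{k_\att T}$.
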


We now give generalizations of our earlier algorithms for the disjoint and nondisjoint settings. Each algorithm first draws sufficient samples for Lemma \ref{lemma:concentration} to hold. Then, it simulates a separate adversary for each of the samples, mimicking the ability of the adversary to respond to the true draw of $\theta$. Each adversary runs a separate instance of a no-regret learning algorithm (FTPL for the disjoint case and online gradient for the nondisjoint case). In each iteration, the defender updates according to the \emph{expectation} over all of the adversaries (since the defender does not know the true $\theta$). More precisely, in the disjoint case, the defender's loss function in iteration $t$ is given by the average of the loss functions generated by each of the individual adversaries. The defender takes a FTPL step according to this average loss. In the nondisjoint case, the defender computes a greedy best response where the objective is given by average influence averted over all of the current adversary strategies. We show the following approximation guarantee for each setting:
\begin{theorem}
	Using inputs $T = \frac{4n^2 \max\{k_\att, k_\deff\}}{\epsilon^2}$, and $N = O\left(\frac{n^2 m T}{\epsilon^2}\log\left(\frac{1}{\delta}\right) \log m\right)$ for Algorithm \ref{alg:ftpl-asymmetric}, the uniform distribution over $\{S_\deff^t\}$ is an $\epsilon$-equilibrium defender strategy. 
\end{theorem}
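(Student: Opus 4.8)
The plan is to reduce the asymmetric problem to a finite-sample zero-sum game, run the no-regret dynamics of Algorithm~\ref{alg:ftpl-asymmetric} in that game, and transfer the guarantee back to Problem~\ref{problem:asymmetric} via Lemma~\ref{lemma:concentration}. First I would draw $\theta_1,\dots,\theta_N$ and form the ``composite'' zero-sum game in which the defender chooses $S_\deff$ (or a distribution over such sets) and the adversary chooses a tuple $(S_\att^1,\dots,S_\att^N)$ --- one response per sampled profile --- with payoff $\frac{1}{N}\sum_j f_{\theta_j}(S_\deff,S_\att^j)$. Since nothing links the $S_\att^j$ across $j$, the inner maximization factorizes coordinatewise, so the value of this game equals $\hat{V}:=\min_{\sigma_\deff}\frac{1}{N}\sum_j\max_{|S_\att|\le k_\att}\E_{S_\deff\sim\sigma_\deff}[f_{\theta_j}(S_\deff,S_\att)]$, i.e.\ exactly the empirical version of Problem~\ref{problem:asymmetric}. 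As in the disjoint section, after the linear rewrite of $f$ each $f_{\theta_j}$ is linear in the defender's and in each adversary's mixed strategy, and the averaged defender loss $\frac{1}{N}\sum_j \ell_{\theta_j}(S_\att^s(\theta_j))$ has coordinates in the same range as a single loss vector; hence the FTPL regret bound behind Theorem~\ref{theorem:disjoint} applies verbatim (in particular, independently of $N$), and it is robust to the losses being chosen adaptively, which is what happens here since the defender's plays react to all adversaries' histories.

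Next I would carry out the usual no-regret bookkeeping. Algorithm~\ref{alg:ftpl-asymmetric} is precisely FTPL for the defender against the averaged loss sequence together with an independent FTPL instance for each simulated adversary $j$ against $\ell_{\theta_j}(S_\deff^s)$; with $T=\frac{4n^2\max\{k_\att,k_\deff\}}{\epsilon^2}$ each such instance has per-round regret a constant fraction of $\epsilon$. Let $\hat{\sigma}_\deff$ and $\hat{\sigma}_\att^j$ be the uniform distributions on $\{S_\deff^t\}_t$ and $\{S_\att^t(\theta_j)\}_t$. The defender's no-regret inequality, together with the fact that for any fixed adversary profile the best averaged defender response is at most $\hat{V}$ (each expectation over $\hat{\sigma}_\att^j$ is bounded by the corresponding max, and the minimum over pure $S_\deff$ equals that over mixed $\sigma_\deff$ by linearity), yields that the realized average payoff $\frac{1}{TN}\sum_{t,j} f_{\theta_j}(S_\deff^t,S_\att^t(\theta_j))$ is at most $\hat{V}+O(\epsilon)$; each adversary's no-regret inequality, averaged over $j$, shows that the same realized average payoff is at least $\frac{1}{N}\sum_j \max_{|S_\att|\le k_\att}\E_{S_\deff\sim\hat{\sigma}_\deff}[f_{\theta_j}(S_\deff,S_\att)]-O(\epsilon)$. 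Chaining these, the empirical objective of $\hat{\sigma}_\deff$ is at most $\hat{V}+O(\epsilon)$.

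Finally I would transfer this to the true objective. Because $\hat{\sigma}_\deff$ has support at most $T$, Lemma~\ref{lemma:concentration} (with the stated $N$) bounds $\E_{\theta\sim D}[\max_{|S_\att|\le k_\att}\E_{S_\deff\sim\hat{\sigma}_\deff}[f_\theta(S_\att,S_\deff)]]$ by its empirical counterpart plus $\epsilon$, hence by $\hat{V}+O(\epsilon)$. It then suffices to bound $\hat{V}$ by the value $V^\star$ of Problem~\ref{problem:asymmetric}: $V^\star$ is attained by some $\sigma^\star$ (the objective is continuous on the compact simplex of defender strategies), and since the numbers $\max_{|S_\att|\le k_\att}\E_{S_\deff\sim\sigma^\star}[f_{\theta_j}(S_\deff,S_\att)]$ are i.i.d.\ in $[0,n]$, a single-strategy Hoeffding bound --- requiring only $O(n^2\epsilon^{-2}\log\frac{1}{\delta})$ samples, which is dominated by the stated $N$ --- gives $\hat{V}\le V^\star+\epsilon$ with probability $\ge 1-\delta$. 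Combining the two directions, $\E_{\theta\sim D}[\max_{|S_\att|\le k_\att}\E_{S_\deff\sim\hat{\sigma}_\deff}[f_\theta]]\le V^\star+O(\epsilon)$, so after rescaling $\epsilon$ by a constant, $\hat{\sigma}_\deff$ is an $\epsilon$-equilibrium defender strategy.

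\textbf{Main obstacle.} The subtle part is routing the two sources of randomness to the right concentration tools: the output $\hat{\sigma}_\deff$ depends on the samples, so bounding its true objective by its empirical objective needs the \emph{uniform} convergence of Lemma~\ref{lemma:concentration} over all defender strategies of support $\le T$ (the reason $N$ carries the extra $T\log m$ factors), whereas bounding $\hat{V}$ only needs pointwise concentration at the fixed optimum $\sigma^\star$. Everything else --- the coordinatewise factorization of the composite adversary's best response, the invariance of the FTPL regret bound under averaging of losses and under adaptive adversaries, and bookkeeping the constants so that the regret term, the Lemma~\ref{lemma:concentration} error, and the Hoeffding error each come out to a constant fraction of $\epsilon$ under the stated $T$ and $N$ --- is routine.
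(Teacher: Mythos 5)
Your proposal is correct and follows essentially the same route the paper takes: the paper proves this theorem only implicitly (its appendix spells out the argument for the nondisjoint analogue, Algorithm 4), by combining the per-sample no-regret guarantees of the simulated adversaries and the defender's FTPL guarantee on the averaged losses with Lemma~\ref{lemma:concentration} for the support-$T$ output distribution, plus a separate pointwise concentration bound at the optimal $\sigma^*$ --- exactly the decomposition you describe. Your explicit treatment of the composite empirical game and its coordinatewise factorization is a clean way of filling in the details the paper omits for the disjoint case.
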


\begin{theorem}
	Run Algorithm 4 with $T = \frac{2L^2D^2}{\epsilon^2}$ iterations, $\eta = \frac{1}{L\sqrt{2T}}$, $\alpha = \ln \frac{n}{\epsilon} + O(1)$, and $N =  O\left(\frac{n^3 T}{\epsilon^2}\log\left(\frac{1}{\delta}\right) \log n\right)$ samples. Let $\hat{\sigma}_T$ be the uniform distribution on $S_\deff^1...S_\deff^T$. With probability at least $1 - \delta$, the defender's payoff using $\hat{\sigma}_T$ is bounded as
	
	\begin{align*}
	\E_{\theta \sim D}\left[\max_{|S_\att| \leq k_\att } \E_{S_\deff \sim \hat{\sigma}_T}\left[f_\theta\left(S_\att, S_\deff\right)\right]\right] \leq 2\tau + \epsilon.
	\end{align*}
	
	where $\tau$ is the optimal value for Problem \ref{problem:asymmetric}.
\end{theorem}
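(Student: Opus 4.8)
The plan is to reduce the unknown distribution $D$ to the empirical distribution on the $N$ sampled profiles $\theta_1,\dots,\theta_N$ and then to replay the argument behind Theorem~\ref{theorem:nondisjoint}, but with an extra average over the $N$ simulated adversaries that Algorithm~\ref{alg:ega} maintains. Since $\hat{\sigma}_T$ is supported on at most $T$ sets, Lemma~\ref{lemma:concentration} (invoked with error parameter a constant fraction of $\epsilon$, which the stated $N$ accommodates) gives, with probability $\ge 1-\delta/2$, that the left-hand side of the claimed bound is within $O(\epsilon)$ of its empirical analogue $\frac1N\sum_{j=1}^N \max_{|S_\att|\le k_\att}\E_{S_\deff\sim\hat{\sigma}_T}[f_{\theta_j}(S_\att,S_\deff)]$. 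It therefore suffices to bound this empirical quantity by $2\tau + O(\epsilon)$.

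For the empirical quantity I would first relax to the fractional attacker polytope: since the indicator of any budget-feasible $S_\att$ lies in $\mathcal{X}$ and agrees there with the multilinear extension, $\max_{|S_\att|\le k_\att}\frac1T\sum_t f_{\theta_j}(S_\att,S_\deff^t) \le \max_{x\in\mathcal{X}}\frac1T\sum_t F_{\theta_j}(x\mid S_\deff^t)$, exactly as in Theorem~\ref{theorem:nondisjoint}. Now apply Theorem~\ref{theorem:oga} to each adversary $j$ separately: inside Algorithm~\ref{alg:ega}, adversary $j$ runs the online gradient updates on the sequence $F_{\theta_j}(\cdot\mid S_\deff^1),\dots,F_{\theta_j}(\cdot\mid S_\deff^T)$, so its iterates satisfy $\tfrac12\max_{x\in\mathcal{X}}\sum_t F_{\theta_j}(x\mid S_\deff^t) - \sum_t F_{\theta_j}(x^t(\theta_j)\mid S_\deff^t) \le \sqrt2\,LD\sqrt T$. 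Averaging this over $j$ and dividing by $T$ yields $\frac1N\sum_j \max_{x\in\mathcal{X}}\frac1T\sum_t F_{\theta_j}(x\mid S_\deff^t) \le 2\cdot\frac1T\sum_t\!\big(\frac1N\sum_j F_{\theta_j}(x^t(\theta_j)\mid S_\deff^t)\big) + 2\tfrac{\sqrt2 LD}{\sqrt T}$, and with $T = 2L^2D^2/\epsilon^2$ the regret term is $O(\epsilon)$.

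It remains to bound, for each round $t$, the averaged induced value $\frac1N\sum_j F_{\theta_j}(x^t(\theta_j)\mid S_\deff^t)$ by $\tau + O(\epsilon)$. The set function $S\mapsto \frac1N\sum_j g(S\mid x^{t}(\theta_j))$ that Algorithm~\ref{alg:ega} hands to Greedy is a nonnegative average of functions of the form covered by Lemma~1, hence monotone submodular with maximum value at most $n$; so the greedy-with-expanded-budget guarantee, run with $\alpha = \ln(n/\epsilon)+O(1)$, gives $\frac1N\sum_j F_{\theta_j}(x^t(\theta_j)\mid S_\deff^t) \le \min_{|S^*|\le k_\deff}\frac1N\sum_j F_{\theta_j}(x^t(\theta_j)\mid S^*) + O(\epsilon)$, paralleling the deterministic best-response theorem. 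Next I bound $\min_{|S^*|\le k_\deff}\frac1N\sum_j F_{\theta_j}(x^t(\theta_j)\mid S^*)$ by the value $v_N$ of the empirical asymmetric game (which exists by the minimax theorem, the mixed payoff being bilinear over a product of simplices): rounding each $x^t(\theta_j)$ by swap rounding to a distribution $\sigma_\att^j$ over sets of size $\le k_\att$ with $\E_{S_\att\sim\sigma_\att^j}[f_{\theta_j}(S^*,S_\att)]\ge F_{\theta_j}(x^t(\theta_j)\mid S^*)$ for every $S^*$, the profile $(\sigma_\att^j)_j$ is a feasible combined attacker, so the displayed minimum is at most the defender's best pure response against $(\sigma_\att^j)_j$, which in turn is at most the maximin value $v_N$. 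Finally, pick a defender strategy $\sigma_\deff^*$ that is $O(\epsilon)$-optimal for Problem~\ref{problem:asymmetric}; a single Hoeffding bound on the $[0,n]$-valued quantity $\max_{|S_\att|\le k_\att}\E_{S_\deff\sim\sigma_\deff^*}[f_\theta(S_\att,S_\deff)]$ (valid with probability $\ge 1-\delta/2$ for the stated $N$) shows $v_N \le \tau + O(\epsilon)$. Chaining the inequalities of the last three paragraphs and rescaling the constant in front of $\epsilon$ gives the bound $2\tau + \epsilon$.

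The main obstacle is the coupling of the $N$ learning dynamics in the middle step: one must check that Theorem~\ref{theorem:oga}'s no-regret guarantee may be invoked for each simulated adversary against the common, adaptively generated defender sequence $\{S_\deff^t\}$, and --- the more delicate point --- that the defender's greedy response to the \emph{average} of the induced submodular objectives still controls the empirical minimax value $v_N$ rather than any per-sample quantity, so that the single concentration bound for $\sigma_\deff^*$ suffices to tie $v_N$ back to $\tau$. The remaining pieces --- the fractional-to-integral rounding of the attacker iterates and the fact that maxima of sums of multilinear extensions over the cardinality polytope are attained at vertices --- are the same multilinear-extension bookkeeping already used in the proof of Theorem~\ref{theorem:nondisjoint}.
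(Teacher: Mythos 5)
Your proof is correct and follows essentially the same route as the paper's: the concentration lemma applied both to $\hat{\sigma}_T$ (support $\le T$) and to a (near-)optimal $\sigma^*$, a separate no-regret guarantee for each of the $N$ simulated adversaries against the common defender sequence, the greedy expanded-budget best response on the \emph{averaged} submodular objective, and the oblivious rounding lemma to pass between fractional attacker iterates and feasible sets. The only cosmetic difference is that you traverse the chain from the left-hand side down to $\tau$ and introduce the empirical game value $v_N$ as a named intermediate (invoking the minimax theorem, which is not actually needed since only the trivial direction is used), whereas the paper runs a single chain of inequalities starting from $\tau$ and uses $\sigma^*$ directly as a witness.
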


That is, the defender can obtain the same approximation guarantee in the same number of iterations. Each iteration takes time $O(N(mn + m \log m))$ to update all of the adversaries, while the defender best response problem still requires one call to greedy as before.

\subsection{Adversarial uncertainty}

\begin{algorithm}
	\caption{FPLT-Adversarial($\epsilon$)}\label{alg:ftpl-adversarial}
	\begin{algorithmic}[1] 
		\State Arbitrarily initialize $S^0_\deff$ and $S^0_\att$
		\For{$t = t...T$}
		\State Draw $p_\att, p_\deff$ uniformly at random from $[0, \frac{1}{\epsilon}]^{m}$
		\State //TopK returns the set consisting of the indices of the smallest $k$ entries of the given vector
		\State $S_\att^t = TopK\left(\left[\sum_{s  =1}^{t-1} \ell(S_\deff^s) +p_\att\right]_{1:m}, k_\att\right) \cup$
		\State $\quad\quad\quad\quad TopK\left(\left[\sum_{s  =1}^{t-1} \ell(S_\deff^s) +p_\att\right]_{m+1:m+n}, \ell\right)$
		\State $S_\deff^t = TopK(\sum_{s  =1}^{t-1} \ell(S_\att^s) + p_\deff, k_\deff)$
		\EndFor
		\State \Return $\{S_\att^t\}$ and $\{S_\deff^t\}$
	\end{algorithmic}
\end{algorithm}

\begin{algorithm}
	\caption{OG-Adversarial$\left(\eta, \alpha, T, k_\att\right)$}\label{alg:eg-adversarial}
	\begin{algorithmic}[1] 
		\State $x^0_i = \frac{1}{mk_\att}$ for $i = 1...m+n$
		\For{$t = 1...T$}
		\State $S_\deff^t$ = Greedy($x^{t-1}$, $\alpha k_\deff$)
		\State $\nabla^t = \nabla F(x^{t-1}|S_\deff^t)$
		\State $x^{t+1}_{1:m}$ = Update($x^t_{1:m}, \nabla^t_{1:m}, k_\att$)
		\State $x^{t+1}_{m+1:m+n}$ = Update$\left(x^t_{m+1:m+n}, \nabla^t_{m+1:m+n}, \ell\right)$
		\EndFor
		\State \Return $\{S_\deff^t\}$
	\end{algorithmic}
\end{algorithm}

We now consider the most conservative uncertainty model, in which the voters' preferences are chosen adversarially within some uncertainty set. Specifically, there is a nominal preference profile $\hat{\theta}$ (e.g., $\hat{\theta}$ may be an estimate from historical data). We are guaranteed that the true $\theta$ lies within the uncertainty set $\mathcal{U}_\ell = \{\theta: |\{v: \theta_v \not= \hat{\theta}_v\}| \leq \ell\}$. That is, the true $\theta$ may differ in up to $\ell$ places from our estimate. The defender solves the robust optimization problem 

\begin{align} \label{problem:robust}
\min_{\sigma_\deff} \max_{\theta \in \mathcal{U}_\ell} \max_{S_\att \leq |k_\att|} \E_{S_{\deff} \sim \sigma_\deff} \left[f(S_\deff, S_\att)\right] 
\end{align}
which optimizes against the worst case $\theta \in \mathcal{U}_\ell$. Note that Problem \ref{problem:robust} essentially places the choice of $\theta$ under the control of the attacker (formally, we can combine the two max operations). We show that the attacker component of the algorithms when payoffs are common knowledge can be generalized to handle this expanded strategy set. Essentially, the attacker will now have two kinds of actions. First, selecting a channel for a fake news message (as before). Second, directly reaching a given voter by changing their initial preference. We equivalently simulate the second class of actions by adding a new channel $v'$ for each voter $v$. The new channel has $q_{v', v} = 0$ and $p_{v', v} = 1$. That is, the attacker always succeeds in influencing $v$ and can never be stopped by the defender. The attacker's pure strategy set now consists of all choices of $k_\deff$ normal channels and $\ell$ of the new channels. 

Our result from the disjoint case goes through essentially unchanged. Algorithm \ref{alg:ftpl-adversarial} runs FTPL for both players, as before. The only change is in the linear optimization step for the attacker, which now selects separately the top $k_\att$ regular channels and $\ell$ new channels (lines 5 and 6). We have the following guarantee:

\begin{theorem}
	Using $T = \frac{4n^2 \max\{k_\att + \ell, k_\deff\}}{\epsilon^2}$ for Algorithm \ref{alg:ftpl-adversarial}, the uniform distribution over $\{S_\deff^t\}$ is an $\epsilon$-equilibrium defender strategy. 
\end{theorem}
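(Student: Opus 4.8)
The plan is to reduce the statement to Theorem \ref{theorem:disjoint} by checking that, after the new-channel reformulation, Problem \ref{problem:robust} remains a disjoint instance in the only sense FTPL needs: the payoff is linear in each player's indicator vector separately, and each feasible set admits efficient linear optimization. First I would make the reduction explicit. Merging the two inner maxima of Problem \ref{problem:robust}, an attacker pure strategy is a pair $(S_\att, F)$ with $|S_\att| \le k_\att$ regular channels and $|F| \le \ell$ voters to flip-and-reach; writing $v'$ for the new channel of voter $v$, this is a single vector $x \in \{0,1\}^{m+n}$ feasible iff $\sum_{u \le m} x_u \le k_\att$ and $\sum_{u > m} x_u \le \ell$. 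Lines 5--6 of Algorithm \ref{alg:ftpl-adversarial} do linear optimization over precisely this set --- top $k_\att$ among regular channels, top $\ell$ among new channels, done independently --- and line 7 is the unchanged defender step.

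Second, I would verify linearity of $f$ on this expanded space. Fix $S_\deff$ and a voter $v$ in channel $u$. If $v' \notin S_\att$, the voter contributes $\hat\theta_v(1 - q_{uv}1[u \in S_\deff])p_{uv}1[u \in S_\att]$, exactly as in the deterministic disjoint case (each voter lies in one regular channel); if $v' \in S_\att$ the voter is switched with certainty (since $q_{v'v}=0$, $p_{v'v}=1$) and contributes $1$. So the per-voter term is $1[v' \in S_\att] + (1 - 1[v' \in S_\att])c_v 1[u \in S_\att]$ for a constant $c_v$ depending only on $S_\deff$; I would argue the bilinear cross term may be dropped at no cost, since once $v'$ is committed the attacker can only weakly gain by spending its regular budget on other voters, so optimizing the linearized objective is without loss. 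The result is that $f$ splits, voter by voter, into a term in $(1[u \in S_\att], 1[u \in S_\deff])$ and a term in $1[v' \in S_\att]$ --- hence linear in the attacker's $(m+n)$-vector given $S_\deff$ and, symmetrically, linear in the defender's $m$-vector given the attacker's strategy, with loss vectors defined as in the disjoint section.

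Third, with linearity in hand the no-regret analysis behind Theorem \ref{theorem:disjoint} transfers verbatim, up to two bookkeeping changes: (i) $f$ is still the expected number of switching voters, so payoffs and all loss inner products stay in $[0,n]$, preserving the $n^2$ factor; (ii) the $\ell_1$-size of the attacker's feasible set --- the quantity behind the $\max\{k_\att, k_\deff\}$ in Theorem \ref{theorem:disjoint} --- now scales with $k_\att + \ell$, since a feasible attacker vector has $k_\att + \ell$ ones, while the ambient dimension $m+n$ never enters because the FTPL bound under a $[0,1/\epsilon]^{m+n}$ perturbation is governed by $\ell_1$ geometry rather than dimension. Substituting $\max\{k_\att + \ell, k_\deff\}$ into the bound gives the stated $T$, and the usual fact that time-averaged play of two no-regret learners in a zero-sum game with individual regrets $\le \epsilon/2$ is an $\epsilon$-equilibrium yields the claim for the uniform distribution on $\{S_\deff^t\}$.

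I expect the linearity check of the second step to be the main obstacle: specifically, making rigorous that the $v'$--$u$ cross term can be discarded without moving the equilibrium value, and that ``flips'' at voters with $\hat\theta_v = 1$ --- which the $p_{v'v}=1$ gadget nominally records as switched rather than removed --- never benefit the attacker and so do not distort the translation of $\mathcal{U}_\ell$ into new channels. Once that is settled, the rest is a mechanical rerun of the disjoint-case argument with $k_\att$ replaced by $k_\att + \ell$.
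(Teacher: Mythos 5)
Your overall route is the same as the paper's. The paper gives no explicit proof of this theorem beyond the observation that FTPL's linear-optimization step decomposes over the partition matroid (top $k_\att$ regular channels plus top $\ell$ new channels, done independently) and that the FTPL regret bound then scales with $k_\att+\ell$ in place of $k_\att$ while payoffs stay in $[0,n]$; that is exactly your first and third steps, and they are fine. The place where you attempt more than the paper --- the linearity check in your second step --- is also where your argument does not close. You correctly observe that after the reformulation each voter $v$ is incident to \emph{two} channels, its regular channel $u$ and its private channel $v'$, so its reach probability is $1[v'\in S_\att]+p_{uv}1[u\in S_\att]-p_{uv}1[v'\in S_\att]1[u\in S_\att]$, with a genuine bilinear cross term in the attacker's own variables. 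Your justification for discarding that term --- that once $v'$ is committed the attacker only weakly gains by redirecting its regular budget to other voters --- fails at the relevant granularity: the regular budget is spent on \emph{channels}, each covering all of $V_u$, so the attacker cannot stop covering $v$ through $u$ without also abandoning every other voter in $V_u$. An attacker best-responding to the linearized loss will therefore in general select both $u$ and $v'$, the linearized payoff over-counts the true one by up to $\sum_{v:\,v'\in S_\att}p_{uv}$ (not $O(\epsilon)$ in general), and the value of the linearized game can strictly exceed that of the true game --- so an $\epsilon$-equilibrium of the former is not automatically an $\epsilon$-equilibrium of the latter.

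To be clear, you have surfaced a subtlety that the paper silently skips (its argument implicitly treats the expanded disjoint game as if it were still linear in the attacker's indicator vector), so this is as much a gap in the source as in your write-up; but flagging the obstacle and asserting the cross term ``may be dropped at no cost'' is not a proof. Closing it would require either (i) showing that at the attacker's best responses to the defender's empirical play the double-coverage term is negligible, (ii) redefining the reformulated game so that selecting $v'$ \emph{replaces} rather than augments the contribution through $u$ (at the cost of making the $v'$-coordinate of the loss depend on the attacker's regular choices, which breaks the online-linear-optimization framing), or (iii) arguing directly from Problem 2 that the adversarial choice of $\theta$ enters the disjoint payoff only through per-channel constants, which is the cleanest path but is not what either you or the paper wrote down.
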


The main technical difference is in the nondisjoint case, where the attacker's problem now corresponds to submodular maximization over a partition matroid (since the budget constraint is now split into two categories instead of a single category as before). More general matroid constraints can complicate submodular maximization, e.g., the greedy algorithm no longer obtains the optimal approximation ratio. 
Fortunately, our use of a continuous relaxation and online gradient ascent for the attacker
can be shown to generalize without loss to arbitrary matroid constraints:

\begin{theorem}
	After $T$ iterations, let $\hat{\sigma}_T$ be the uniform distribution on $S_\deff^1...S_\deff^T$ output by Algorithm \ref{alg:eg-adversarial}. The defender's payoff using $\hat{\sigma}_T$ (with respect to Problem \ref{problem:robust}) is bounded as
	
	\begin{align*}
	\max_{|S_\att| \leq k_\att} \E_{S_{\deff} \sim \hat{\sigma}_T} \left[f(S_\deff, S_\att)\right] \leq 2\left(\tau + \epsilon + \frac{L^2 D_{k_\att + \ell}^2}{2\sqrt{T}}\right).
	\end{align*}
	
\end{theorem}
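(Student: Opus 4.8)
The plan is to reduce Problem \ref{problem:robust} to exactly the template behind Theorem \ref{theorem:nondisjoint}, after enlarging the attacker's strategy space. First I would fold the two inner maximizations into one: since $\min_{\sigma_\deff}\max_{\theta \in \mathcal{U}_\ell}\max_{|S_\att| \le k_\att}\E_{S_\deff \sim \sigma_\deff}[f(S_\deff,S_\att)]$ is a zero-sum interaction, the choice of $\theta \in \mathcal{U}_\ell$ can be absorbed into the attacker. Modeling each voter-flip as a dedicated channel $v'$ with $p_{v',v}=1$, $q_{v',v}=0$ (as described in the text), the attacker's pure-strategy set becomes all pairs consisting of $k_\att$ ordinary channels and at most $\ell$ flip-channels, i.e.\ the independent sets of a partition matroid on a ground set of size $m+n$; the value of the resulting game is exactly $\tau$. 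The key observation is that the extended objective is still monotone and submodular in the attacker's set variable (a flip-channel only contributes an extra multilinear factor and is never reached by the defender), so its multilinear extension $F(\cdot|S_\deff)$ over the matroid polytope $\mathcal{X}' = \{x \in [0,1]^{m+n} : \sum_{u \le m} x_u \le k_\att,\ \sum_{u > m} x_u \le \ell\}$ is continuous submodular, and the defender's averted-influence function $g(\cdot|\sigma_\att)$ remains monotone submodular on the (unchanged) defender ground set.

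Next I would re-run the no-regret argument behind Theorem \ref{theorem:oga} with $\mathcal{X}$ replaced by $\mathcal{X}'$. Its two ingredients both survive. (i) For a single monotone continuous submodular $F$ on any down-closed convex body $K$, every first-order stationary point $x$ (one with $\langle \nabla F(x), y - x\rangle \le 0$ for all $y \in K$) satisfies $F(x) \ge \tfrac12 \max_{y \in K} F(y)$; this argument uses only that $x \vee y \in K$ by down-closedness and $F(x \vee y) \le F(x)+F(y)$, never the cardinality structure, so it applies verbatim to the partition-matroid polytope. (ii) The online mirror-ascent regret bound depends only on the gradient-norm bound $L$ and the Bregman diameter of the feasible set, which for $\mathcal{X}'$ is $D_{k_\att+\ell}$ since the total budget across the two blocks is $k_\att + \ell$. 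Combining, the attacker's iterates satisfy $\tfrac12 \max_{x^* \in \mathcal{X}'}\sum_t F(x^*|S_\deff^t) - \sum_t F(x^t|S_\deff^t) \le \sqrt{2}\,L\,D_{k_\att+\ell}\sqrt{T}$, which after dividing by $T$ and collecting constants gives the $\tfrac{L^2 D_{k_\att+\ell}^2}{2\sqrt T}$ term.

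Finally I would glue the pieces as in Theorem \ref{theorem:nondisjoint}. Submodularity of $g(\cdot|\sigma_\att)$ lets the greedy step of Algorithm \ref{alg:eg-adversarial}, run with expanded budget $\ln(n/\epsilon)k_\deff$, return $S_\deff^t$ whose expected influence is within $\epsilon$ of the true best response to $x^t$, so $\E_{S_\att \sim x^t}[f(S_\deff^t, S_\att)] \le \tau + \epsilon$ for every $t$. Averaging this over $t$, invoking the attacker's no-regret guarantee to bound $\tfrac1T\sum_t F(x^t|S_\deff^t) \ge \tfrac12\max_{x^* \in \mathcal{X}'}\tfrac1T\sum_t F(x^*|S_\deff^t) - \tfrac{\sqrt{2}\,L\,D_{k_\att+\ell}}{\sqrt T}$, and noting $\max_{x^* \in \mathcal{X}'}\tfrac1T\sum_t F(x^*|S_\deff^t) \ge \max_{|S_\att| \le k_\att}\E_{S_\deff \sim \hat{\sigma}_T}[f(S_\deff,S_\att)]$ because pure attacker strategies are feasible points of $\mathcal{X}'$, a rearrangement yields the claimed $2\left(\tau + \epsilon + \tfrac{L^2 D_{k_\att+\ell}^2}{2\sqrt T}\right)$ bound.

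The step I expect to be the main obstacle is (i)–(ii) above: checking that the ``stationary point $\Rightarrow \tfrac12$-approximation'' lemma and the mirror-ascent regret analysis genuinely carry over from the simple cardinality polytope to an arbitrary matroid polytope, and in particular bounding the Bregman diameter $D_{k_\att+\ell}$ of $\mathcal{X}'$ under the entropic (exponentiated-gradient) mirror map, since $\mathcal{X}'$ is no longer a scaled simplex but an intersection of two block-budget constraints with box bounds. A secondary subtlety is verifying that folding $\theta \in \mathcal{U}_\ell$ into the attacker's strategy — and passing to its continuous relaxation — faithfully preserves both the game value $\tau$ and the submodular structure, e.g.\ that no relaxed point can place more than $\ell$ total mass on flip-channels.
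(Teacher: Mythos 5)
Your proposal is correct and follows essentially the same route as the paper's own (very terse) argument: extend the attacker's ground set with flip-channels, observe that the mirror-ascent regret bound and the integrality/rounding lemma both hold over arbitrary matroid polytopes, and rerun the nondisjoint-case chain of inequalities with the enlarged diameter $D_{k_\att+\ell}$. The only loose end is your final "collecting constants" step turning $\frac{\sqrt{2}LD_{k_\att+\ell}}{\sqrt{T}}$ into $\frac{L^2 D_{k_\att+\ell}^2}{2\sqrt{T}}$, which is not a real derivation — but that mismatch originates in the paper's own theorem statement, which is inconsistent with its Theorem \ref{theorem:nondisjoint}, so the substance of your argument is fine.
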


\section{Experiments}

We now examine our algorithms' empirical performance, and what the resulting values reveal about the difficulty of defending elections across different settings. We focus on the nondisjoint setting for two reasons. First it is the more general case. Second, FTPL is guaranteed to converge to an $\epsilon$-optimal strategy in the disjoint setting, while in the nondisjoint setting is important to empirically assess our algorithm's approximation quality. Our experiments use the Yahoo webscope dataset \cite{yahoowebscope}. The dataset logs bids placed by advertisers on a set of phrases. We create instances where the phrases are advertising channels and the accounts are voters. To generate each instance, we sample a random subset of 100 channels and 500 voters. Each propagation probability is drawn uniformly at random from $[0,0.2]$ for each player. Each voter's preference is also drawn uniformly at random. All results are averaged over 30 iterations.  

We start with fully known preferences and examine the approximation quality of Algorithm \ref{alg:eg}. Importantly, we do not increase the defender's budget (i.e., $\alpha = 1$). Empirically, Algorithm \ref{alg:eg} performs substantially better than its theoretical guarantee, rendering bicriteria approximation unnecessary.

We use the mixed strategies that Algorithm \ref{alg:eg} outputs to compute upper and lower bounds on the value of the game. The upper bound $b_u$ is the attacker's best response to the defender mixed strategy, while the lower bound $b_\ell$ is the defender's best response to the attacker mixed strategy. It is easy to see that the defender cannot obtain utility better than $b_\ell$, and Algorithm \ref{alg:eg}'s mixed strategy guarantees utility no worse than $b_u$. Hence, we use $\frac{b_u - b_\ell}{b_\ell}$ as an upper bound on the optimality gap. Since finding exact best responses is NP-hard, we use mixed integer programs (see the supplement). 

Table \ref{table:optgap} shows that Algorithm \ref{alg:eg} computes highly accurate defender equilibrium strategies across a range of values for $k_a$ and $k_d$. We use $T = 50$ iterations with $\eta = 0.05$. \emph{The average optimality gap is always (provably) under 6\%}. Moreover, this value is an upper bound, and the real gap may be smaller. We conclude that Algorithm \ref{alg:eg} is highly effective at computing near-optimal defender strategies. Next, Figure \ref{fig:payoffs} examines how the attacker's payoff varies as a function of $k_a$ and $k_d$. Even for large $k_d$, the defender cannot completely erase the attacker's impact (to be expected since $q_{uv} < 1$ and so the defender's message is not perfectly effective). However, the defender can obtain a large reduction in the attacker's influence when $k_a$ is high. The empirical payoffs are convex in $k_d$, meaning that the defender achieves this reduction with a moderate value of $k_d$ and sees little improvement afterwards. When $k_a$ is low, even large defender expenditures have a relatively little impact. Intuitively, it is harder for the defender to ensure an intersection between their own strategy and the attacker's when the attacker only picks a small number of channels to begin with.
\begin{table} 
	\fontsize{9}{9} \selectfont
	\begin{tabular}{l|ccc}
		\midrule
		$k_d/k_a$  & 5 & 10 & 20\\
		\midrule
		$5$ & $0.016 \pm 0.007$ & $0.016 \pm 0.010$ & $0.026 \pm 0.015$ \\
		$10$ & $0.017 \pm 0.008$ & $0.020 \pm 0.008$ & $0.037 \pm 0.017$ \\
		$20$ & $0.014 \pm 0.006$ & $0.025 \pm 0.012$ & $0.053 \pm 0.022$ \\
		\midrule
	\end{tabular}
	\caption{Upper bound on optimality gap for Algorithm \ref{alg:eg}. Average over 30 instances; $\pm$ denotes standard deviation.} \label{table:optgap}
\end{table}

\begin{figure} 
	\centering
	\includegraphics[width=1.5in]{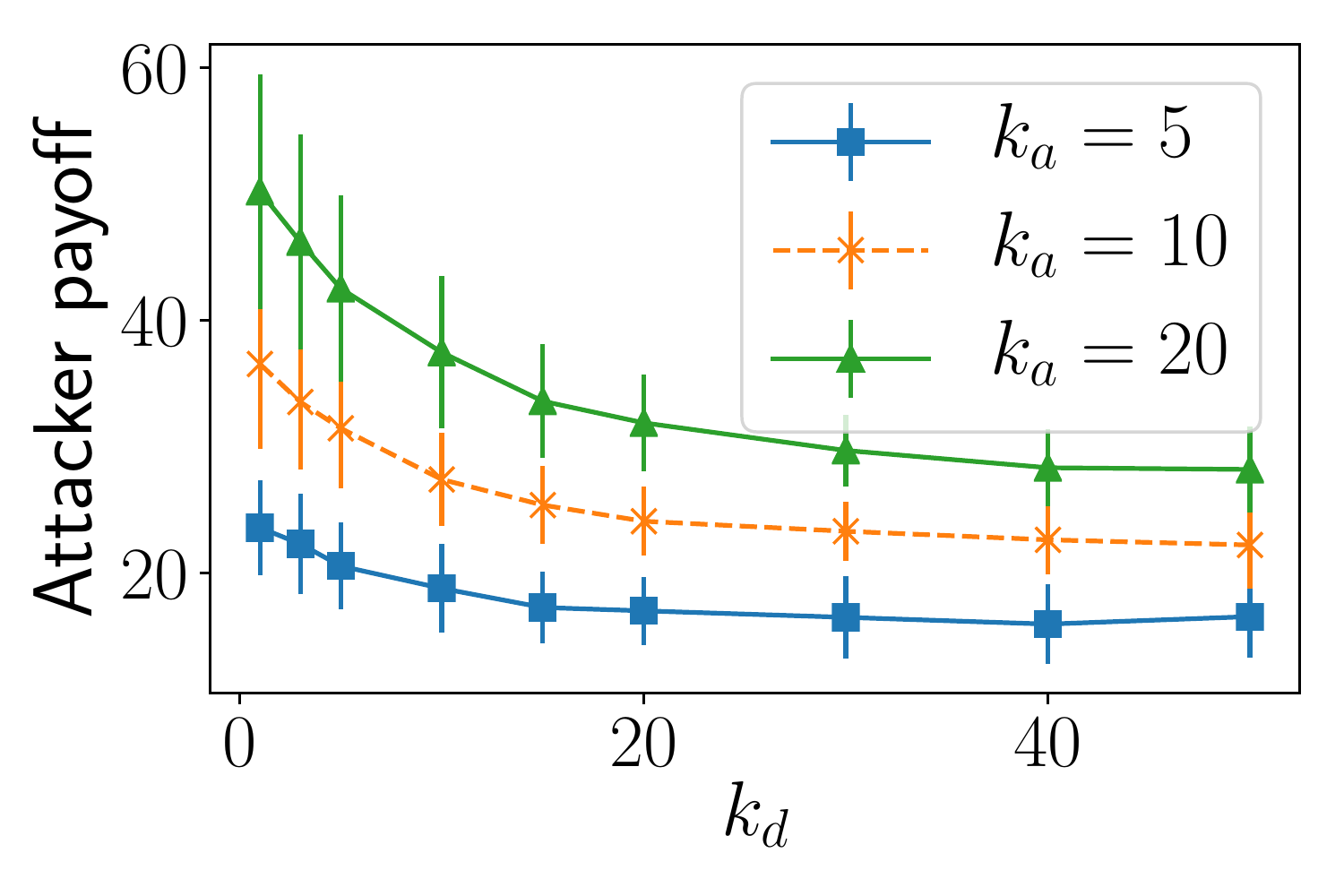}
	\includegraphics[width=1.5in]{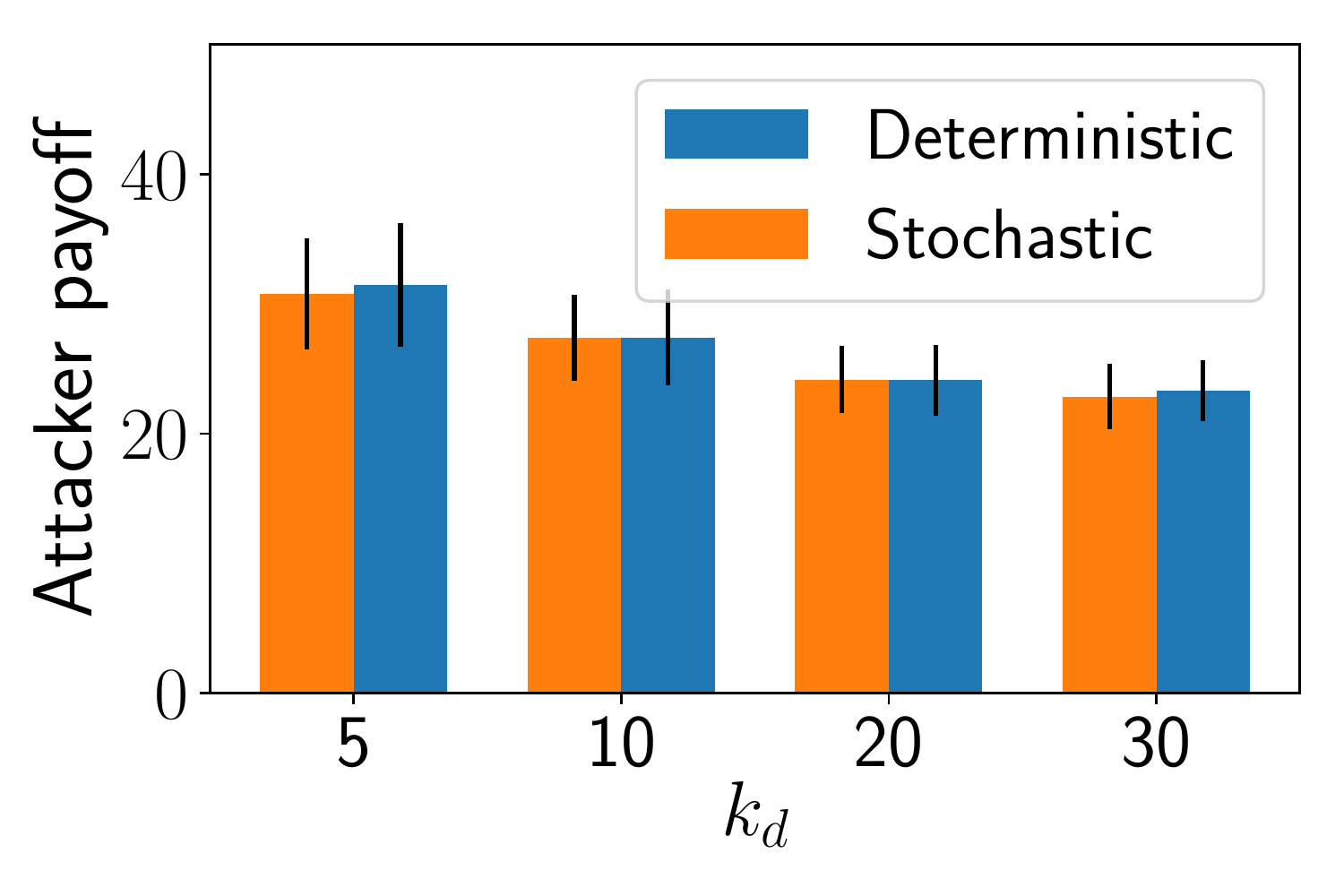}
	\includegraphics[width=1.5in]{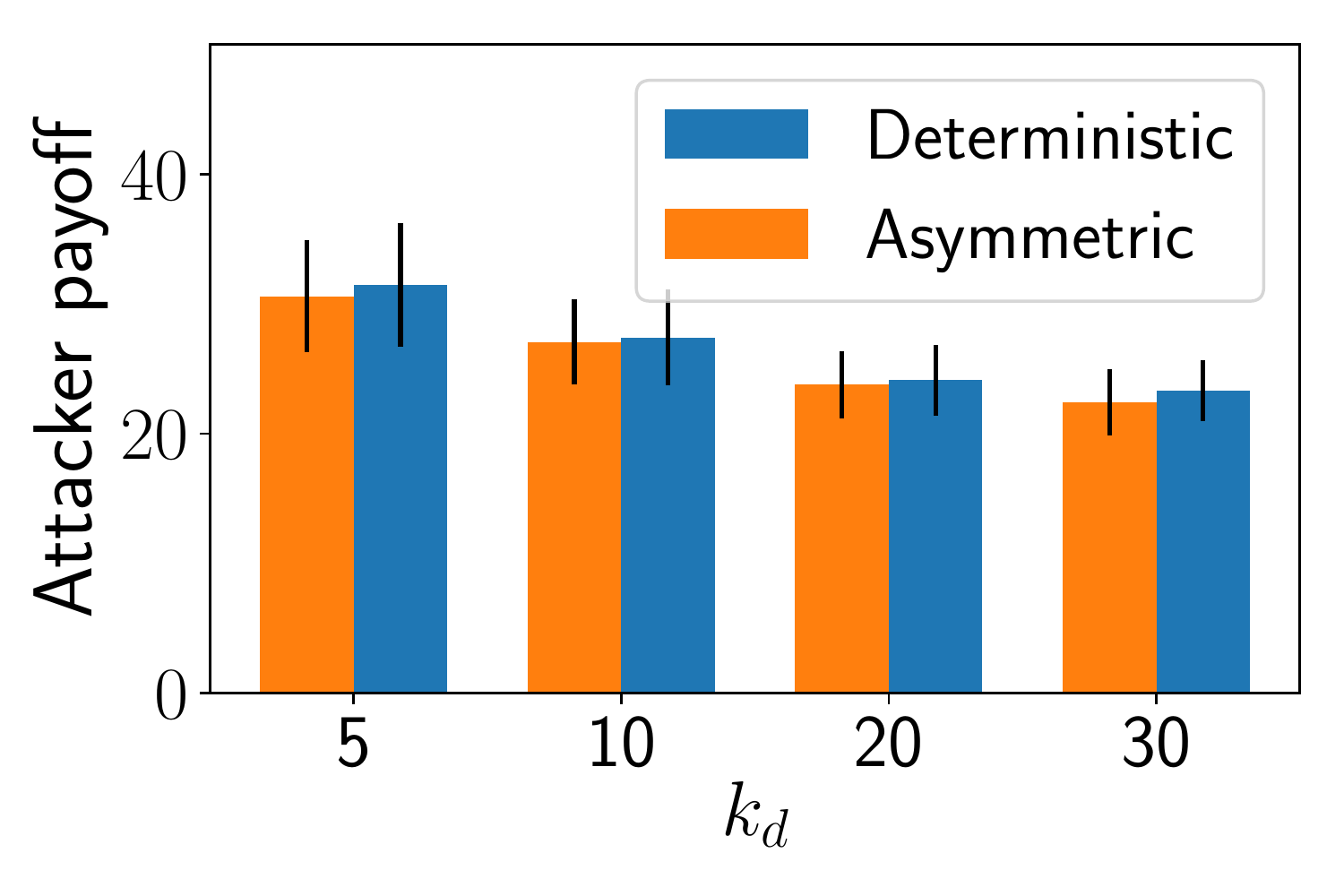}
	\includegraphics[width=1.5in]{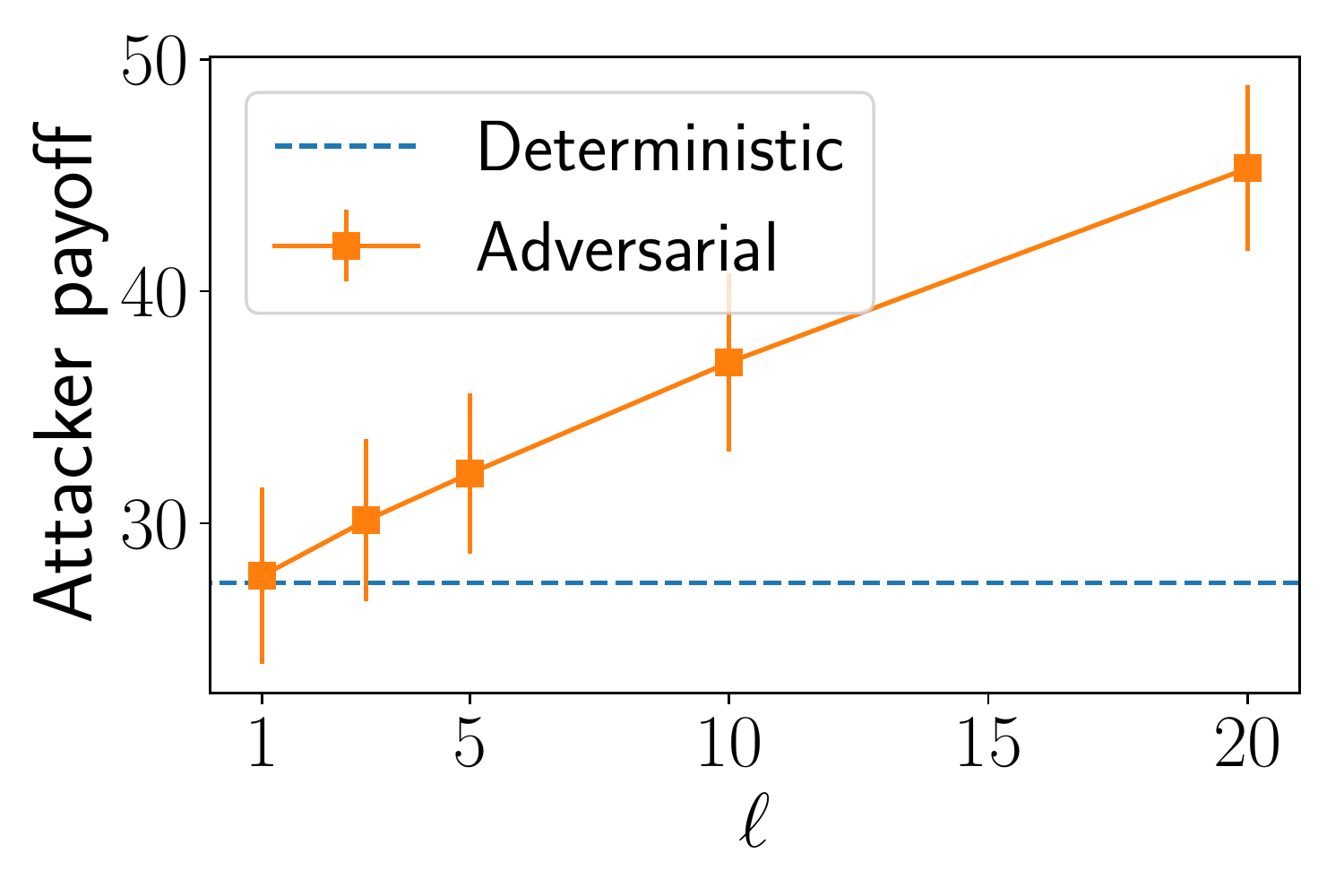}
	\caption{Top left: Attacker's payoff as the budget constraint for each player varies. Top right: attacker payoff with stochastic uncertainty. Bottom left: asymmetric uncertainty. Bottom right: adversarial uncertainty, varying the uncertainty set size $\ell$. } \label{fig:payoffs}
\end{figure}
Next, we examine the impact of uncertainty. Figure \ref{fig:payoffs} shows the attacker's payoff under stochastic, asymmetric, and adversarial uncertainty compared to fully known payoffs. Stochastic uncertainty leaves the attacker's payoff virtually identical. Surprisingly, this also holds for the asymmetric case. However, in the adversarial setting, the attacker's payoff scales linearly with $\ell$, indicating that the defender cannot mitigate the impact of such uncertainty. Hence, the defender can benefit substantially from gathering enough information to at least estimate the distribution of $\theta$, even if the attacker still has privileged information.

\textbf{Conclusion: } We introduce and study the problem of a defender mitigating the impact of adversarial misinformation on an election. Across a range of population structures and uncertainty models, we provide polynomial time approximation algorithms to compute equilibrium defender strategies, which empirically provide near-optimal payoffs. Our results show that the defender can substantially benefit from modest resource investments, and from gathering enough information to estimate voter preferences.

\textbf{Acknowledgments: } This work was partially supported by the National Science Foundation (CNS-1640624, IIS-1649972, and IIS-1526860), Office of Naval Research (N00014-15-1-2621), and Army Research Office (W911NF1610069, MURI W911NF1810208).
\small
\bibliographystyle{aaai}
\bibliography{elections_bib}

\onecolumn

\appendix
\normalsize

	\section{Hardness result}

We reduce from maximum coverage to the defender equilibrium computation problem. Suppose that we are given a family of sets $S_1...S_m$ from a universe $U$. The objective of maximum coverage is to select a subset $T$ of $k$ sets which maximize $\left| \bigcup_{S_i \in T} S_i \right|$. We create an instance of our game as follows. Each set $S$ corresponds to a channel $u_S$ and each element $i \in U$ to a voter $v_i$. Each voter has $\theta_v = 1$. Each $u_S$ has an edge to every $v_i$ such that $i \in S$. This edge has $p_{uv} = 1$ and $q_{uv} = 1$. The attacker has budget $k_a = m$ and the defender has budget $k_d = k$. Regardless of what the defender plays, an equilibrium strategy for the attacker is the pure strategy which selects all of the channels. Hence, the defender's equilibrium computation problem is identical to finding the pure strategy which maximizes the number of voters reached, since the attacker always reaches every voter, and every voter counts towards the objective since $\theta_v = 1$. This is just the maximum coverage problem. Since it is well-known that it is NP-hard to approximate maximum coverage to within a factor better than $1 - 1/e$, the theorem follows.
\section{Analysis of FTPL}
\begin{theorem}
	Let $k = \max\{k_a, k_d\}$. After $\frac{4n^2k}{\epsilon^2}$ iterations of FTPL, the uniform distribution on each player's history forms an $\epsilon$-equilibrium. 
\end{theorem}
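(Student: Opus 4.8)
\emph{Proof plan.} The idea is to exploit the linearity of the disjoint objective and reduce the theorem to the classical fact that no-regret dynamics converge to a Nash equilibrium in a zero-sum game. Recall that in the disjoint case $f(S_\deff,S_\att)=\sum_{u\in S_\att}\sum_{v\in V_u}\theta_v p_{uv}-1[S_\deff]^\top\ell(S_\att)$, and symmetrically $f$ is modular in $S_\att$ for any fixed $S_\deff$; hence each player's best-response problem is linear optimization over the $0/1$ vectors of $\ell_1$-norm at most $k_\deff$ (resp.\ $k_\att$), which is exactly what the $TopK$ step solves once the per-round loss vectors are given the appropriate sign. I would view a run of Algorithm~\ref{alg:ftpl} as two coupled instances of online linear optimization: the defender faces the loss sequence induced by $S_\att^1,\dots,S_\att^T$ and the attacker faces the one induced by $S_\deff^1,\dots,S_\deff^T$, each playing FTPL, which by Kalai and Vempala has sublinear regret for online linear optimization. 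Write $R_\deff(T)$ and $R_\att(T)$ for the two accumulated (expected) regrets.

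The second step is the standard Freund--Schapire-style reduction. Let $v^\star$ be the value of the game and $\hat\sigma_\deff,\hat\sigma_\att$ the uniform mixtures over $\{S_\deff^t\}$ and $\{S_\att^t\}$. Since $f$ is modular in $S_\att$ for fixed $S_\deff$, the attacker's best response to $\hat\sigma_\deff$ is attained at a pure strategy, so the attacker's regret bound gives $\max_{|S_\att|\le k_\att}\frac1T\sum_t f(S_\deff^t,S_\att)\le \frac1T\sum_t f(S_\deff^t,S_\att^t)+R_\att(T)/T$; the defender's regret bound gives $\frac1T\sum_t f(S_\deff^t,S_\att^t)\le \min_{|S_\deff|\le k_\deff}\frac1T\sum_t f(S_\deff,S_\att^t)+R_\deff(T)/T\le v^\star+R_\deff(T)/T$, the last step being the minimax inequality applied to the attacker's empirical mixed strategy. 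Chaining these,
\[
\max_{|S_\att|\le k_\att}\E_{S_\deff\sim\hat\sigma_\deff}\!\left[f(S_\deff,S_\att)\right]\ \le\ v^\star+\frac{R_\deff(T)+R_\att(T)}{T},
\]
and the mirror-image computation bounds the attacker's exploitability, so $(\hat\sigma_\deff,\hat\sigma_\att)$ is a $\tfrac{R_\deff(T)+R_\att(T)}{T}$-equilibrium.

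The last step is to plug in the concrete magnitudes. Because the $V_u$ are disjoint and $\theta_v,p_{uv},q_{uv}\in[0,1]$, every loss vector $\ell(\cdot)$ has $\|\ell(\cdot)\|_1\le\sum_u|V_u|=n$, every feasible decision vector has $\ell_1$-norm at most $k:=\max\{k_\att,k_\deff\}$, and the payoffs lie in $[0,n]$. Instantiating the FTPL regret bound with these parameters, and with the perturbation drawn uniformly from $[0,\tfrac1\epsilon]^m$ as in Algorithm~\ref{alg:ftpl} (the scale that balances the perturbation-bias term against the leader-stability term), yields $R_\deff(T),R_\att(T)\le n\sqrt{kT}$ up to the constant absorbed into that scale. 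Then $\tfrac{R_\deff(T)+R_\att(T)}{T}\le 2n\sqrt{k/T}\le\epsilon$ precisely when $T\ge 4n^2k/\epsilon^2$, which is the claimed iteration count.

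I expect the main obstacle to be this final quantitative step: pinning down the FTPL regret bound with the right constants for this specific feasible set (binary vectors of bounded support, with loss coordinates as large as $|V_u|$), tracking the $\ell_1$- versus $\ell_\infty$-norms of both the decisions and the losses through Kalai and Vempala's analysis, and checking that the perturbation scale $1/\epsilon$ in the pseudocode (up to any problem-dependent normalization of the losses) is the one that makes the bound come out to exactly $4n^2k/\epsilon^2$. One further point to treat carefully is that the pseudocode redraws the perturbation each round rather than once; this only helps and is handled by the usual Be-the-Leader argument, but it should be stated. Finally, the asymmetry $k_\att\ne k_\deff$ is why the bound carries $k=\max\{k_\att,k_\deff\}$ rather than the two budgets separately, and this should be made explicit when bounding the two regrets.
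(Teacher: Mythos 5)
Your proposal is correct and follows essentially the same route as the paper's own proof: exploit linearity of the disjoint objective to cast each player's updates as FTPL for online linear optimization with regret $n\sqrt{kT}$, then apply the standard no-regret-to-equilibrium (Freund--Schapire) conversion and set $T = 4n^2k/\epsilon^2$ so that $2n\sqrt{k}/\sqrt{T} \le \epsilon$. The only difference is presentational: you chain the two regret bounds through the minimax inequality to bound exploitability directly, whereas the paper lower-bounds the empirical average payoff by $\tau$ first; both yield the identical $2n\sqrt{k}/\sqrt{T}$ error term.
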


\begin{proof}
	
	FTPL guarantees that after $T$ iterations, the defender's reward is bounded compared to the optimum as 
	
	\begin{align*}
	\sum_{t = 1}^T 1[S^t_\deff]^\top \ell(S_\att^t) - \max_{|S| \leq k_\deff}\sum_{t = 1}^T 1[S]^\top \ell(S_\att^t) \leq n\sqrt{k_d T}.
	\end{align*}
	
	By adding and subtracting the constant term in the utility function and dividing by $T$, we get 
	
	\begin{align*}
	\frac{1}{T}\sum_{t = 1}^T f(S_\deff^t, S_\att^t) - \max_{|S| \leq k_\deff} \frac{1}{T}\sum_{t = 1}^T f(S, S_\att^t) \leq \frac{n\sqrt{k_d}}{\sqrt{T}}.
	\end{align*}
	
	Applying the same reasoning from the perspective of the attacker yields
	
	\begin{align*}
	\max_{|S| \leq k_\att} \frac{1}{T}\sum_{t = 1}^T f(S_\deff^t, S) - \frac{1}{T}\sum_{t = 1}^T f(S_\deff^t, S_\att^t) \leq \frac{n\sqrt{k_a}}{\sqrt{T}}.
	\end{align*}
	
	Let $\tau$ denote the value of the game and $k = \max\{k_a, k_d\}$. We have 
	
	\begin{align*}
	\tau &= \min_{\sigma_\deff} \max_{S_\att \leq |k_\att|} \E_{S_\deff \sim \sigma_\deff}[f(S_\deff, S_\att)]\\
	&\leq \max_{S_\att \leq |k_\att|}  \E_{S_\deff \sim \hat{\sigma}_\deff}[f(S_\deff, S_\att)]\\
	&\leq \frac{1}{T}\sum_{t = 1}^T f(S_\deff^t, S_\att^t) + \frac{n\sqrt{k}}{\sqrt{T}} \quad \text{(no regret guarantee for the attacker)}
	\end{align*}
	
	This implies that 
	
	\begin{align*}
	\frac{1}{T}\sum_{t = 1}^T f(S_\deff^t, S_\att^t) \geq \tau - \frac{n\sqrt{k}}{\sqrt{T}}
	\end{align*}
	
	and so
	
	\begin{align*}
	\max_{|S| \leq k_\deff} \frac{1}{T}\sum_{t = 1}^T f(S, S_\att^t) = \max_{|S| \leq k_\deff} \E_{S_{\att} \sim \hat{\sigma}_\att}[ f(S, S_\att^t)]  \geq \tau - \frac{2n\sqrt{k}}{\sqrt{T}}.
	\end{align*}
	
	In other words, the empirical attacker strategy $\hat{\sigma}_\att$ guarantees the attacker payoff at least $\tau - \frac{2n\sqrt{k}}{\sqrt{T}}$ against \emph{any} pure strategy for the defender. This implies that $\hat{\sigma}_\att$ is a $\frac{2n\sqrt{k}}{\sqrt{T}}$-approximate equilibrium strategy for the attacker. The same line of reasoning applied to the defender completes the argument.  
	
\end{proof}

\section{Regret guarantee for online mirror ascent}

We analyze the general online mirror ascent algorithm. Our analysis draws heavily on the analysis of online mirror descent for convex functions in \cite{hazan2016introduction}, to which refer the reader for additional background. Define the Bregman divergence with respect to a function $R$ as 

\begin{align*}
B_R(x||y) = R(x) - R(y) - \nabla R(y)^\top (x-y)
\end{align*}

Let $||\cdot||_t$ be the norm induced by the Bregman divergence $B_R(x_{t}||x_{t+1})$ and $||\cdot||_t^*$ be the corresponding dual norm. Let $L$ be an upper bound on $||\nabla_t||_t^*$ and $D$ be an upper bound on $\max_{x \in \mathcal{X}} R(x) - R(x_1)$. We have the following general guarantee:

\begin{theorem}
	Let $F_1...F_T$ be a sequence of DR-submodular functions and $\nabla_t = \nabla F_t$. If we set $\eta = \frac{1}{L\sqrt{2T}}$ then
	\begin{align*}
	\frac{1}{T}\sum_{t = 1}^T F_t(x_t) \geq  \frac{1}{2}\left(\frac{1}{T}\sum_{t = 1}^T F_t(x^*)\right) -  \frac{\sqrt{2} LD}{\sqrt{T}}
	\end{align*}
	where $x^* = \max_{x \in \mathcal{X}} \sum_{t = 1}^T F_t(x^*)$.
\end{theorem}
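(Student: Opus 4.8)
The plan is to marry the textbook regret analysis of online mirror descent, run in ``ascent'' form, with the ``local implies global'' property of monotone continuous DR-submodular functions; the latter is exactly what downgrades the usual no-regret comparison against $x^{*}$ into a comparison against $\tfrac12 F_t(x^{*})$, producing the factor $\tfrac12$ in the statement. For the first ingredient I would write each update as a step in the dual coordinates $\nabla R(\cdot)$ followed by a Bregman projection onto $\mathcal{X}$, and invoke the standard mirror-descent inequality (sign-flipped for maximization; see \cite{hazan2016introduction}): for the fixed comparator $x^{*}$ and every round $t$,
\begin{align*}
\eta\,\langle \nabla_t,\, x^{*}-x_t\rangle \;\le\; B_R(x^{*}\|x_t) - B_R(x^{*}\|x_{t+1}) + \tfrac{\eta^{2}}{2}\big(\|\nabla_t\|_t^{*}\big)^{2}.
\end{align*}
Summing over $t=1,\dots,T$, the Bregman terms telescope, $-B_R(x^{*}\|x_{T+1})\le 0$, $B_R(x^{*}\|x_1)\le D$ by the definition of $D$ together with $x_1$ minimizing $R$ over $\mathcal{X}$, and $\|\nabla_t\|_t^{*}\le L$; dividing by $\eta$ gives $\sum_{t=1}^{T}\langle \nabla_t, x^{*}-x_t\rangle \le D/\eta + \eta L^{2}T/2$.

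The second, and main, ingredient is the inequality that for each monotone continuous DR-submodular $F_t$ and every $x\in\mathcal{X}\subseteq[0,1]^{m}$,
\begin{align*}
\langle \nabla F_t(x),\, x^{*}-x\rangle \;\ge\; F_t(x^{*}) - 2\,F_t(x).
\end{align*}
I would derive this from two one-line facts: (a) $\langle\nabla F_t(x), x^{*}\rangle \ge \langle\nabla F_t(x), (x\vee x^{*})-x\rangle \ge F_t(x\vee x^{*}) - F_t(x) \ge F_t(x^{*})-F_t(x)$, using $\nabla F_t\ge 0$ (monotonicity) with $(x\vee x^{*})-x \le x^{*}$ coordinatewise, then concavity of $F_t$ along the nonnegative direction $(x\vee x^{*})-x$, then monotonicity again; and (b) $\langle\nabla F_t(x), x\rangle \le F_t(x)-F_t(0)\le F_t(x)$, from concavity of $F_t$ along the segment $[0,x]$ and nonnegativity. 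Subtracting (b) from (a) gives the display. Note this uses only properties of $F_t$ on the ambient box $[0,1]^{m}$, so $x\vee x^{*}$ need not be feasible.

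To finish, I would sum the DR-submodular inequality over $t$ and compare with the mirror-ascent bound, obtaining $\sum_t F_t(x^{*}) - 2\sum_t F_t(x_t) \le D/\eta + \eta L^{2}T/2$; setting $\eta = \tfrac{1}{L\sqrt{2T}}$ and dividing by $2T$ yields $\tfrac1T\sum_t F_t(x_t) \ge \tfrac12\big(\tfrac1T\sum_t F_t(x^{*})\big) - \tfrac{\sqrt2\,LD}{\sqrt T}$ once the constants are simplified (the residual $O(L/\sqrt T)$ term is absorbed into the stated bound). Instantiating the Euclidean mirror map gives $L=b\sqrt m,\ D=\sqrt{k_a}$ and the negative-entropy map gives $L=b,\ D=k_a\log m$, matching Theorem~\ref{theorem:oga}. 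The main obstacle is getting the DR-submodular step exactly right: the factor of $2$ (rather than $1$) is what forces the $\tfrac12$-approximation and is easy to lose; a secondary point of care is keeping the first step stated for a general mirror map so a single argument covers both updates in Algorithm~\ref{alg:eg}, which is routine once the divergence-induced norms $\|\cdot\|_t,\|\cdot\|_t^{*}$ are in place.
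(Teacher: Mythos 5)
Your proof is correct and rests on the same two pillars as the paper's: the DR-submodular inequality $\langle\nabla F_t(x_t),\,x^*-x_t\rangle \ge F_t(x^*)-2F_t(x_t)$ (the sole source of the factor $\tfrac12$) and a mirror-ascent regret bound of order $D/\eta+\eta L^2T$. Where you diverge is in how the regret bound is established. The paper linearizes the losses and runs a follow-the-regularized-leader argument: its Lemma \ref{lemma:btl} is proved via the inductive ``be-the-leader'' inequality $\sum_t g_t(u)\ge\sum_t g_t(x_{t+1})$, followed by a Bregman/Cauchy--Schwarz bound on $\nabla_t^\top(x_t-x_{t+1})$. You instead use the one-step inequality $\eta\langle\nabla_t,x^*-x_t\rangle\le B_R(x^*\|x_t)-B_R(x^*\|x_{t+1})+\tfrac{\eta^2}{2}\bigl(\|\nabla_t\|_t^*\bigr)^2$ and telescope. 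Both are standard and give the same rate; your version is arguably the closer match to the updates as literally written in Algorithm \ref{alg:eg} (a gradient step followed by a projection), whereas the paper's inductive lemma analyzes the lazy/FTRL form. A second difference is that the paper imports the submodular inequality wholesale from Equation 7.2 of \cite{hassani2017gradient}, while you rederive it from monotonicity and concavity along nonnegative and nonpositive directions, including the correct observation that $x\vee x^*$ need only lie in $[0,1]^m$, not in $\mathcal{X}$. Your constant bookkeeping ($B_R(x^*\|x_1)\le D$ via first-order optimality of $x_1$, and absorbing the residual $O(L/\sqrt{T})$ term) is no looser than the paper's own, which itself carries a $D$ versus $D^2$ inconsistency between the lemma and the final statement.
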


\begin{proof}
	
	We start out by relating regret to an intermediate quantity at each step:
	\begin{lemma} \label{lemma:btl}
		$\sum_{t = 1}^T F_t(x^*) - 2F_t(x_t) \leq \sum_{t = 1}^T \nabla_t^\top (x_t - x_{t+1}) + \frac{1}{\eta} D^2$
	\end{lemma}
	
	\begin{proof}
		Define $g_0(x) = \frac{1}{\eta}R(x)$, $g_t(x) = -\nabla_t^\top x$. Via Equation 7.2 of \cite{hassani2017gradient}, we have that 
		
		\begin{align*}
		\sum_{t=1}^T F_t(x^*) - 2F_t(x_t) &\leq \sum_{t=1}^T \nabla_t^\top (x^* - x_t)\\
		&= \sum_{t=1}^T -\nabla_t^\top (x_t - x^*)\\
		&=  \sum_{t=1}^T g_t(x_t) - g_t(x^*)
		\end{align*}

		and so it suffices to bound $\sum_{t = 1}^T g_t(x_t) - g_t(x^*)$. As a first step, we show
		
		\begin{lemma}
			For any $u \in \mathcal{X}$, $\sum_{t = 0}^T g_t(u) \geq \sum_{t = 0}^T g_t(x_{t+1})$
		\end{lemma}
		\begin{proof}
			By induction on $T$. For the base case, we have that $x_1 = \argmin_{x\in \mathcal{X}} R(X)$ and so $g_0(u) \geq g_0(x_1)$. Now assume for some $T'$ that
			
			\begin{align*}
			\sum_{t = 0}^{T'} g_t(u) \geq \sum_{t = 0}^{T'} g_t(x_{t+1}).
			\end{align*}
			
			Now we will prove that the statement holds for $T'+1$. Since $x_{T'+2} = \argmin_{x \in \mathcal{X}}\sum_{t = 0}^{T'+1} g_t(x)$ we have
			
			\begin{align*}
			\sum_{t = 0}^{T'+1} g_t(u) &\geq \sum_{t = 0}^{T'+1} g_t(x_{T'+2})\\
			&= \sum_{t =0}^{T'} g_t(x_{T'+2}) + g_{T' +1}(x_{T'+2})\\
			&\geq \sum_{t = 0}^{T'} g_t(x_{t+1}) + g_{T'+1}(x_{T'+2})\\
			&= \sum_{t = 0}^{T'+1} g_t(x_{t+1}).
			\end{align*}
			
			where the third line uses the induction hypothesis for $u = x_{T'+2}$.
			
		\end{proof}
		
		Accordingly we have
		
		\begin{align*}
		\sum_{t =1}^T g_t(x_t) - g_t(x^*)&\leq \sum_{t = 1}^T \left[ g_t(x_t) - g_t(x_{t+1}) \right] + g_0(x_1) - g_0(x^*)\\
		&= \sum_{t = 1}^T g_t(x_{t}) - g_t(x_{t+1}) + \frac{1}{\eta}\left(R(x_1) - R(x^*)\right)\\
		&\leq \sum_{t = 1}^T g_t(x_{t}) - g_t(x_{t+1}) + \frac{1}{\eta} D^2
		\end{align*}
		which concludes the proof of Lemma \ref{lemma:btl}. 
	\end{proof}
	
	We now proceed to prove the main theorem. Define $\Phi_t(x) =   \sum_{s = 1}^t -\eta \nabla_s^\top x + R(x)$. Using the definition of the Bregman divergence, we have that
	
	\begin{align*}
	\Phi_t(x_t) &= \Phi_t(x_{t+1}) + (x_t - x_{t+1})^\top \nabla \Phi_t(x_{t+1}) + B_{\Phi_t}(x_t || x_{t+1})\\
	&\geq \Phi_t(x_{t+1}) + B_{\Phi_t}(x_t||x_{t+1})\\
	&= \Phi_t(x_{t+1}) + B_R(x_t||x_{t+1}).
	\end{align*}
	
	The inequality uses the fact that $x_{t+1}$ minimizes $\Phi_t$ over $\mathcal{X}$. The last equality uses the fact that the term $-\nabla_s^\top x$ is linear and doesn't affect the Bregman divergence. Thus, 
	
	\begin{align*}
	B_R(x_t||x_{t+1}) &\leq \Phi_t(x_t) - \Phi_t(x_{t+1})\\
	&= (\Phi_{t-1}(x_t) - \Phi_{t-1}(x_{t+1})) - \eta \nabla_t^\top (x_t - x_{t+1})\\
	&\leq -\eta \nabla^\top (x_t - x_{t+1})
	\end{align*}
	
	Let $||\cdot||_t$ be the norm induced by $B_R$ at the point $x_t, x_{t+1}$ and $||\cdot||_t^*$ be its dual norm. Via the Cauchy-Schwarz inequality
	
	\begin{align*}
	-\nabla_t^\top (x_{t} - x_{t+1}) &\leq ||\nabla_t||^*_t \cdot ||x_{t+1} - x_{t}||_t\\
	&= ||\nabla_t||^*_t \cdot \sqrt{2 B_R(x_t || x_{t+1})}\\
	&\leq ||\nabla_t||^*_t \cdot \sqrt{2 \eta (-\nabla_t)^\top(x_t - x_{t+1})}
	\end{align*}
	
	which implies
	
	\begin{align*}
	-\nabla_t^\top (x_t - x_{t+1}) \leq 2 \eta \left(||\nabla_t||^*_t\right)^2.
	\end{align*}
	
	Combining this with Lemma \ref{lemma:btl} and optimizing over the choice of $\eta$ now suffices to prove the theorem. 
\end{proof}

Now, the theorem in the main text is obtained by specializing the regularizer $R$ to obtain the Euclidean and exponentiated gradient updates. For the Euclidean update, we can take $R(x) = \frac{1}{2}||x- x_0||_2^2$ for any $x \in \mathcal{X}$. We thus obtain the standard Euclidean projection (see \cite{hazan2016introduction} for details). For the exponentiated gradient update, we can take $R(x) = \sum_i x_i \log x_i$. We now derive the associated projection. Writing down the Bregman divergence induced by the negative entropy, we want to solve the projection problem

\begin{align*}
x = \argmin_{\{x : ||x||_1 \leq k, 0 \leq x_i \leq 1\}} \sum_i x_i \log \left(\frac{x_i}{y_i}\right) - \sum_i x_i - \sum_i y_i
\end{align*}

This gives the Lagrangian

\begin{align*}
F(x, \lambda, \nu) = \sum_i x_i \log \left(\frac{x_i}{y_i}\right) - \sum_i x_i - \sum_i y_i + \lambda \left(\sum_i x_i - k\right) + \sum_i \nu_i \left(x_i - 1\right).
\end{align*}

At the minimizer, the KKT conditions require

\begin{align*}
&\frac{d}{dx_i}F(x, \lambda, \nu) = \log \left(\frac{x_i}{y_i}\right) + \lambda + \nu_i = 0\\
&\frac{d}{d\lambda}F(x, \lambda, \nu) = \left(\sum_i x_i\right) - 1 = 0\\
&\frac{d}{d\nu_i}F(x, \lambda, \nu) =  x_i - 1 = 0
\end{align*}

Solving for $x$ in the first equation yields $x_i = y_i e^{-(\lambda + \nu_i)}$. Now if we set 
\begin{align*}
\lambda = \ln \frac{\sum_i \min \{1, y_i\}}{k}
\end{align*}

\begin{align*}
\nu_i =   \begin{cases} 
\ln y_i& \text{if } y_i \geq 1 \\
0       & \text{otherwise } 
\end{cases}
\end{align*}

it is easy to check that complementary slackness, as well as the second and third equations (primal feasibility) are also satisfied. Hence, $(x, \lambda, \nu)$ form an optimal solution.

We remark that the bounds on $L$ and $D$ for each setting are well-known because they are the same as for mirror ascent in the offline case; see \cite{hassani2017gradient,wilder2018equilibrium} for details. 

\section{Defender best response}

We prove here that greedy best responses with an expanded budget guarantee the defender an optimal best response, up to additive error $\epsilon$. We start out with a useful characterization of the surrogate function $g$:

\begin{theorem}
	For any attacker mixed strategy $\sigma_\att$, $g$ is a monotone submodular function. 
\end{theorem}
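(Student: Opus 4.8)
The plan is to reduce the claim to the submodularity of a single "coverage-style" set function, and then use that $g$ is a nonnegative combination (an expectation) of such functions, together with the fact that monotone submodular functions are closed under nonnegative linear combinations. First I would unfold the definition: writing $h_{S_\att}(S_\deff) = f(\emptyset,S_\att) - f(S_\deff,S_\att)$, we have $g(S_\deff|\sigma_\att) = \E_{S_\att\sim\sigma_\att}[h_{S_\att}(S_\deff)]$. Since taking an expectation over $\sigma_\att$ is a nonnegative combination, it suffices to show that for every fixed attacker pure strategy $S_\att$, the map $S_\deff \mapsto h_{S_\att}(S_\deff)$ is monotone and submodular in $S_\deff$.

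Next I would expand $h_{S_\att}$ using the formula for $f$. Since $f(S_\deff,S_\att) = \sum_{v} \theta_v \left(\prod_{u\in S_\deff} 1-q_{uv}\right)\left(1-\prod_{u\in S_\att}1-p_{uv}\right)$, the attacker-dependent factor $r_v := \theta_v\left(1-\prod_{u\in S_\att}1-p_{uv}\right)$ is a fixed nonnegative constant once $S_\att$ is fixed, and $f(\emptyset,S_\att) = \sum_v r_v$. Hence
\begin{align*}
h_{S_\att}(S_\deff) = \sum_{v\in V} r_v \left(1 - \prod_{u\in S_\deff}(1-q_{uv})\right).
\end{align*}
So it reduces to showing that for each voter $v$, the function $S_\deff \mapsto 1 - \prod_{u\in S_\deff}(1-q_{uv})$ is monotone submodular; again, monotone submodular functions are closed under nonnegative combinations (here with weights $r_v \ge 0$), so the full claim follows. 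Monotonicity is immediate because adding a channel $u$ multiplies the product $\prod(1-q_{uv})$ by $1-q_{uv}\in[0,1]$, so the product can only shrink and $1-(\text{product})$ can only grow. For submodularity, I would compute the marginal gain of adding channel $w\notin S_\deff$: it equals $q_{wv}\prod_{u\in S_\deff}(1-q_{uv})$. Since $\prod_{u\in S_\deff}(1-q_{uv})$ is nonincreasing in $S_\deff$, this marginal gain is nonincreasing in $S_\deff$, which is exactly the submodularity (diminishing returns) condition. (This is the standard "probabilistic coverage" function; one can also observe it is the multilinear extension of a coverage function evaluated at a 0/1 point, hence submodular.)

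The argument is essentially routine once the attacker-side terms are correctly identified as constants; the only mild subtlety — and the step I would be most careful about — is confirming that $g$ is genuinely well-defined and finite and that the expectation over $\sigma_\att$ (a distribution over exponentially many pure strategies) preserves submodularity, which it does because submodularity is a pointwise linear inequality over the values on each pair $A\subseteq B$ and $u$, and expectations respect such inequalities. I would conclude by noting that each $h_{S_\att}$ being bounded in $[0,n]$ makes the interchange of expectation and the submodularity inequalities unproblematic, giving that $g(\cdot|\sigma_\att)$ is monotone submodular as claimed.
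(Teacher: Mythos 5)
Your proof is correct and follows essentially the same route as the paper's: rewrite $g$ as a nonnegative combination, over draws of $S_\att$ and voters $v$, of the per-voter coverage functions $S_\deff \mapsto 1-\prod_{u\in S_\deff}(1-q_{uv})$, and invoke closure of monotone submodularity under nonnegative linear combinations. The only difference is that you explicitly carry out the marginal-gain computation that the paper dismisses as ``easy to see,'' which is a fine (and slightly more complete) presentation of the identical argument.
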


\begin{proof}
	
	We write out the objective as 
	
	\begin{align*}
	g(S_\deff) &= \E_{S_\att \sim \sigma_\att}\left[\sum_{v \in V} \left[1 - \prod_{u \in S_\att} 1 - p_{uv}\right] - \sum_{v \in V} \left(1 - \prod_{u \in S_\att} 1 - p_{uv}\right) \prod_{u \in S_\deff} 1 - q_{uv} \right]\\ 
	&= \E_{S_\att \sim \sigma_\att}\left[\sum_{v \in V} \left(1 - \prod_{u \in S_\att} 1 - p_{uv}\right) \left(1 -  \prod_{u \in S_\deff} 1 - q_{uv} \right)\right]. 	
	\end{align*}
	
	Now, it is easy to see that for $g$ is a nonnegative linear combination of submodular functions (one for each fixed draw of $S_\att$ and $v \in V$). 
\end{proof}

\begin{theorem}
	Suppose that we run the greedy algorithm on the function $g$ with a budget of $\ln\left(\frac{n}{\epsilon}\right) k_\deff$. Then, the resulting set $S_\deff$ satisfies
	
	\begin{align*}
	\E_{S_\att \sim \sigma_\att}\left[f(S_\deff, S_\att)\right] \leq \min_{|S^*| \leq k_\deff} \E_{S_\att \sim \sigma_\att}\left[f(S_\deff, S_\att)\right] + \epsilon.
	\end{align*}
\end{theorem}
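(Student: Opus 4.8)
The plan is to combine the standard analysis of the greedy algorithm for monotone submodular maximization with a budget-expansion trick, and then translate the resulting multiplicative guarantee on $g$ into an additive guarantee on $f$ by exploiting the boundedness of the constant term $f(\emptyset, S_\att)$. By the previous lemma, $g(\cdot \mid \sigma_\att)$ is monotone submodular, so I can invoke the classical fact that running greedy for $\ell$ steps on a monotone submodular function returns a set whose value is at least $\left(1 - (1 - 1/k_\deff)^{\ell}\right) g(S^*) \geq \left(1 - e^{-\ell/k_\deff}\right) g(S^*)$, where $S^*$ is the optimal set of size $k_\deff$. Taking $\ell = \ln\!\left(\frac{n}{\epsilon}\right) k_\deff$ yields $g(S_\deff) \geq \left(1 - \frac{\epsilon}{n}\right) g(S^*)$.

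Next I would unpack the definition $g(S) = \E_{S_\att \sim \sigma_\att}[f(\emptyset, S_\att) - f(S, S_\att)]$ and rearrange the inequality $g(S_\deff) \geq (1 - \epsilon/n) g(S^*)$. Writing $A = \E_{S_\att \sim \sigma_\att}[f(\emptyset, S_\att)]$, this becomes
\begin{align*}
A - \E_{S_\att \sim \sigma_\att}[f(S_\deff, S_\att)] \;\geq\; \left(1 - \tfrac{\epsilon}{n}\right)\!\left(A - \E_{S_\att \sim \sigma_\att}[f(S^*, S_\att)]\right),
\end{align*}
which after simplification gives
\begin{align*}
\E_{S_\att \sim \sigma_\att}[f(S_\deff, S_\att)] \;\leq\; \tfrac{\epsilon}{n} A + \left(1 - \tfrac{\epsilon}{n}\right)\E_{S_\att \sim \sigma_\att}[f(S^*, S_\att)] \;\leq\; \tfrac{\epsilon}{n} A + \E_{S_\att \sim \sigma_\att}[f(S^*, S_\att)].
\end{align*}

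To finish, I would observe that $f(\emptyset, S_\att)$ counts a subset of the voters (those who switch), so $f(\emptyset, S_\att) \leq n$ for every $S_\att$, hence $A \leq n$ and the error term satisfies $\frac{\epsilon}{n} A \leq \epsilon$. Combining, $\E_{S_\att \sim \sigma_\att}[f(S_\deff, S_\att)] \leq \min_{|S^*| \leq k_\deff}\E_{S_\att \sim \sigma_\att}[f(S^*, S_\att)] + \epsilon$, as claimed.

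\textbf{Main obstacle.} The only real subtlety is the one already flagged in the discussion preceding the statement: a constant-factor (e.g.\ $1 - 1/e$) approximation for \emph{maximizing} $g$ does not by itself yield any approximation for \emph{minimizing} $f$, because of the additive constant $A$ in $g$'s definition. The budget expansion by the logarithmic factor $\ln(n/\epsilon)$ is precisely what drives the multiplicative slack down to $\epsilon/n$, which is small enough that, after multiplying by the crude bound $A \leq n$, it contributes only an additive $\epsilon$ — turning a useless multiplicative bound on $g$ into a meaningful additive bound on $f$. Everything else is the textbook greedy argument, so I do not expect further difficulty.
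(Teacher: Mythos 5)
Your proposal is correct and follows essentially the same route as the paper's proof: both apply the budget-expanded greedy guarantee $g(S_\deff) \geq (1 - \epsilon/n)\,g(S^*)$ and then convert the multiplicative bound on $g$ into an additive bound on $f$ using the crude bound of $n$ on the constant term (the paper bounds $\frac{\epsilon}{n} g(S^*) \leq \epsilon$ via $g(S^*) \leq n$, while you bound $\frac{\epsilon}{n} A \leq \epsilon$ via $A \leq n$ — an immaterial difference). The only thing you take on faith that the paper spells out is the greedy lemma $g(S_\ell) \geq (1 - e^{-\ell/k_\deff}) g(S^*)$, which is indeed standard.
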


\begin{proof}
	Applying Lemma \ref{lemma:greedy} with $\ell = \ln\left(\frac{n}{\epsilon}\right)k_\deff$ yields that $g(S_\deff) \geq \left(1 - \frac{\epsilon}{n}\right) g(S^*)$. Translating this in terms of the original function $f$, we have that 
	
	\begin{align*}
	\E_{S_\att \sim \sigma_\att}\left[f(S_\deff, S_\att)\right] &= \E_{S_\att \sim \sigma_\att}\left[f(\emptyset, S_\att)\right] -  g(S_\deff)\\
	&\leq \E_{S_\att \sim \sigma_\att}\left[f(\emptyset, S_\att)\right] -  \left(1 - \frac{\epsilon}{n}\right)g(S^*)\\
	&= \E_{S_\att \sim \sigma_\att}\left[f(S^*, S_\att)\right] + \frac{\epsilon}{n}g(S^*)\\
	&\leq \E_{S_\att \sim \sigma_\att}\left[f(S^*, S_\att)\right] + \epsilon\\
	&= \min_{|S| \leq k_\deff} \E_{S_\att \sim \sigma_\att}\left[f(S, S_\att)\right] + \epsilon
	\end{align*}
	
	where the first inequality uses Lemma \ref{lemma:greedy}, the second inequality uses that $g(S^*) \leq n$, and the final equality uses that $S^*$ is an optimal solution for both maximizing $g$ and minimizing $\E\left[f(\cdot, S_\att)\right]$ (since the two problems only differ by a constant). 
	
\end{proof}

\begin{lemma}\label{lemma:greedy}
	After $\ell$ iterations, the set $S_\ell$ maintained by greedy satisfies
	\begin{align*}
	g(S_\ell) \geq \left(1 - e^{-\frac{\ell}{k_d}}\right) \max_{|S^*| \leq k_\deff} g(S^*).
	\end{align*}
\end{lemma}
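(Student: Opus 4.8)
The plan is to run the textbook analysis of the greedy algorithm for monotone submodular maximization, taking care that the comparison set $S^*$ always has size at most $k_\deff$ even though greedy is allowed $\ell > k_\deff$ iterations. By the previous theorem, $g(\cdot|\sigma_\att)$ is monotone and submodular, and moreover $g(\emptyset|\sigma_\att) = \E_{S_\att\sim\sigma_\att}[f(\emptyset,S_\att) - f(\emptyset,S_\att)] = 0$, which we will use to seed the recursion.

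First I would establish the standard one-step progress bound. Fix an optimal $S^*$ with $|S^*|\le k_\deff$ and let $S_i$ denote the greedy set after $i$ iterations. Using monotonicity and then submodularity (telescoping over the elements of $S^*\setminus S_i$),
\begin{align*}
g(S^*) \;\le\; g(S^*\cup S_i) \;\le\; g(S_i) + \sum_{u\in S^*\setminus S_i}\bigl(g(S_i\cup\{u\}) - g(S_i)\bigr) \;\le\; g(S_i) + k_\deff\cdot\max_{u}\bigl(g(S_i\cup\{u\})-g(S_i)\bigr).
\end{align*}
Since greedy picks the element maximizing the marginal gain, the $(i+1)$-st step satisfies $g(S_{i+1})-g(S_i) \ge \tfrac{1}{k_\deff}\bigl(g(S^*)-g(S_i)\bigr)$.

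Next I would convert this into a geometric decay of the residual. Let $\delta_i = g(S^*) - g(S_i)$; the one-step bound gives $\delta_{i+1} \le (1 - \tfrac{1}{k_\deff})\,\delta_i$, and this holds for every $i \ge 0$ (the argument never uses $i \le k_\deff$, which is the only place where the budget expansion matters). Iterating from $\delta_0 = g(S^*) - g(\emptyset) = g(S^*)$ yields $\delta_\ell \le (1-\tfrac{1}{k_\deff})^\ell\, g(S^*)$, hence
\begin{align*}
g(S_\ell) \;=\; g(S^*) - \delta_\ell \;\ge\; \Bigl(1 - \bigl(1-\tfrac{1}{k_\deff}\bigr)^\ell\Bigr) g(S^*) \;\ge\; \Bigl(1 - e^{-\ell/k_\deff}\Bigr)\max_{|S^*|\le k_\deff} g(S^*),
\end{align*}
using $1 - x \le e^{-x}$. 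This is exactly the claimed bound.

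There is no serious obstacle here — the only things to be careful about are (i) that the recursion is stated against a size-$k_\deff$ optimum while greedy runs for $\ell$ steps, so one must not accidentally restrict to $i \le k_\deff$; and (ii) confirming $g(\emptyset|\sigma_\att)=0$ so that $\delta_0 = g(S^*)$ rather than something larger, which is what makes the multiplicative guarantee clean. Everything else is the standard Nemhauser–Wolsey–Fisher argument applied to the particular submodular function $g$.
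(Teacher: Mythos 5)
Your proposal is correct and is essentially the paper's own argument: both run the standard Nemhauser--Wolsey--Fisher greedy analysis against a size-$k_\deff$ optimum for $\ell$ iterations, derive the one-step bound $g(S_{i+1})-g(S_i)\ge\frac{1}{k_\deff}\bigl(g(S^*)-g(S_i)\bigr)$ from monotonicity and submodularity, and iterate to get the $\bigl(1-e^{-\ell/k_\deff}\bigr)$ factor. The only cosmetic difference is that you track the residual $g(S^*)-g(S_i)$ directly (and explicitly note $g(\emptyset)=0$), whereas the paper phrases the recursion in terms of $g(S^*\cup S_i)-g(S_i)$; the substance is identical.
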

\begin{proof}
	Let $v_\ell$ be the item selected in iteration $\ell$. As a consequence of submodularity, we have
	
	\begin{align*}
	g(S^* \cup S_{\ell-1}) - g(S_{\ell-1}) &\leq \sum_{v \in S^* \setminus S_{\ell-1}} g(S_{\ell-1} \cup \{v\}) - g(S_{\ell-1})\\
	&\leq |S^* \setminus S_{\ell-1}| \cdot \max_{v \in V} g(S_{\ell-1} \cup \{v\}) - g(S_{\ell-1})\\
	&\leq k_\deff g(S_{\ell-1} \cup \{v_\ell\}) - g(S_{\ell-1})\\
	&= k_\deff (g(S_{\ell}) - g(S_{\ell-1}))
	\end{align*}
	
	which implies
	
	\begin{align*}
	g(S_\ell) -g(S_{\ell-1}) \geq \frac{1}{k_\deff} g(S^* \cup S_{\ell-1}) - g(S_{\ell-1})
	\end{align*}
	
	and so 
	
	\begin{align*}
	g(S^*) - g(S_\ell) \leq g(S^* \cup S_\ell) - g(S_\ell) \leq \left(1 - \frac{1}{k_\deff}\right) \left(g(S^* \cup S_{\ell-1}) - g(S_{\ell-1})\right).
	\end{align*}
	
	Since $S_0 = \emptyset$, $g(S^* \cup S_0) - g(S_0) = g(S^*)$. Applying induction, we obtain that after $\ell$ iterations, 
	
	\begin{align*}
	g(S^*) - g(S_\ell) \leq \left(1 - \frac{1}{k_\deff}\right)^\ell g(S^*) \leq e^{-\frac{\ell}{k_\deff}} g(S^*)
	\end{align*}
	
	which proves the lemma.
\end{proof}

\section{Nondisjoint case}

We now prove the full approximation guarantee for the defender in the nondisjoint case. We start out with a simple lemma, essentially capturing that the attacker does not gain any expressive power by optimizing over the relaxed continuous space instead of distributions over their feasible set. 

\begin{lemma} \label{lemma:integrality}
	For any monotone submodular function $f$ and $x \in \mathcal{X}$, there exists $S$ with $|S| \leq k$ such that $f(S) \geq \E_{S' \sim x}[f(S')]$
\end{lemma}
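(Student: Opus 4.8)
The plan is to recognize that $\E_{S' \sim x}[f(S')]$ is exactly the multilinear extension $F(x)$ (as defined in the main text, taking $f$ here as a generic monotone submodular set function), so the lemma is equivalent to: there is an integral point $y = 1[S]$ with $\sum_u y_u \le k$ and $F(y) \ge F(x)$, using $F(1[S]) = f(S)$. I would establish this by a \emph{pipage rounding} argument (in the spirit of \cite{calinescu2011maximizing}): starting from $x$, iteratively make one more coordinate integral while never decreasing $F$ and always keeping $\sum_u y_u \le k$ and $0 \le y \le 1$. When no fractional coordinates remain, the resulting point is the indicator of a feasible set $S$ with $|S| \le k$ and $f(S) = F(y) \ge F(x)$.

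The structural facts I would prove first are the two standard properties of $F$. (i) $F$ is affine in each coordinate, hence $\partial_u^2 F = 0$; combined with the cross-partial identity $\partial_u \partial_w F = \E[f(S'\cup\{u,w\}) - f(S'\cup\{u\}) - f(S'\cup\{w\}) + f(S')] \le 0$ (submodularity, applied pointwise over the randomness in the other coordinates), this gives that along any ``swap'' direction $t \mapsto (x_u, x_w) = (t,\,c-t)$ with the other coordinates fixed, $\tfrac{d^2}{dt^2}F = -2\,\partial_u\partial_w F \ge 0$, i.e. $F$ is \emph{convex} on that segment. (ii) $F$ is nondecreasing in each coordinate (monotonicity of $f$). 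Property (i) means the maximum of $F$ on a swap segment is attained at an endpoint, where at least one of $x_u, x_w$ becomes $0$ or $1$; property (ii) handles coordinates we want to raise into budget slack.

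The rounding step then splits into two cases. If $\sum_u x_u < k$ and some coordinate $x_u$ is fractional, raise $x_u$ (this does not decrease $F$ by (ii)) until either $x_u = 1$ or $\sum_u x_u = k$; in the first sub-case the number of fractional coordinates strictly drops, in the second we fall into the next case. If $\sum_u x_u = k$ and $x$ is not integral, then there are at least two fractional coordinates $x_u, x_w$ (since the fractional coordinates must sum to an integer), and we apply a swap step keeping $x_u + x_w$ fixed: by (i) we move to the endpoint with larger $F$, making at least one of them integral and keeping the total, hence the budget, exactly unchanged. Each step strictly decreases the number of fractional coordinates and never decreases $F$, so the process terminates at an integral $y$ with $\sum_u y_u \le k$; setting $S = \{u : y_u = 1\}$ finishes the proof. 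The main point requiring care is \textbf{feasibility preservation}: one must check that the swap step keeps the coordinate sum exactly fixed (so $\sum y_u \le k$ is maintained) and that all updated values stay in $[0,1]$, and that the slack-raising step is stopped precisely when the budget becomes tight; the sign of the convexity in (i) via submodularity is the other place to be careful.
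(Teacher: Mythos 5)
Your proposal is correct and takes essentially the same route as the paper: the paper proves this lemma in one step by citing pipage/swap rounding for the multilinear extension over a matroid polytope and then invoking the probabilistic method, while you unpack the pipage-rounding argument itself (convexity of $F$ along swap directions via nonpositive cross-partials, monotone raising of coordinates into budget slack, and feasibility preservation). Your version is simply a self-contained expansion of the black box the paper references, and the details you flag as needing care (sign of the second derivative, keeping the coordinate sum fixed during swaps, termination) are all handled correctly.
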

\begin{proof}
	This is a simple consequence of known rounding algorithms for the multilinear extension of a monotone submodular function over matroid polytopes (e.g., swap rounding \cite{chekuri2010dependent} or pipage rounding \cite{calinescu2011maximizing}). Since such algorithms produce a random set $S$ satisfying $\E[f(S)] \geq \E_{S' \sim x}[f(S')]$, the desired set must exist by the probabilistic method. 
\end{proof}

Now we proceed to the main result:

\begin{theorem}
	After $T$ iterations, let $\hat{\sigma}_T$ be the uniform distribution on $S_\deff^1...S_\deff^T$. The defender's payoff using $\hat{\sigma}_T$ is bounded as
	
	\begin{align*}
	\max_{|S_\att| \leq k_\att} \E_{S_{\deff} \sim \hat{\sigma}_T} \left[f(S_\deff, S_\att)\right] \leq 2\left(\tau + \epsilon + \frac{\sqrt{2}LD}{\sqrt{T}}\right).
	\end{align*}
	
\end{theorem}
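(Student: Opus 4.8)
The plan is to chain together the two guarantees already established: the approximate no-regret bound for the attacker's online gradient algorithm (Theorem~\ref{theorem:oga}) and the approximate best-response guarantee for the defender (the greedy-with-expanded-budget theorem). The target inequality has the form ``defender's worst-case payoff $\leq 2(\tau + \text{error terms})$'', so the natural route is to (i) bound the true value $\tau$ from below in terms of the realized play, and (ii) bound the realized play in terms of the worst-case payoff of $\hat{\sigma}_T$, then combine.

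First I would recall that at each iteration $S_\deff^t$ is an approximate best response to the current attacker marginal $x^t$, so that $\E_{S_\att \sim x^t}[f(S_\deff^t, S_\att)] \leq \min_{|S| \le k_\deff} \E_{S_\att \sim x^t}[f(S, S_\att)] + \epsilon \le F(x^t | S^*_\deff) + \epsilon$ for the optimal defender set $S^*_\deff$ on that round; averaging over $t$, $\frac{1}{T}\sum_t F(x^t|S_\deff^t) \le \tau + \epsilon$, using that the minimax value controls the per-round best response (since a fixed optimal defender mixed strategy does at least as well in expectation). Next, I would apply Theorem~\ref{theorem:oga} with the sequence $F(\cdot|S_\deff^1),\dots,F(\cdot|S_\deff^T)$: this gives $\frac{1}{2}\max_{x^*}\sum_t F(x^*|S_\deff^t) - \sum_t F(x^t|S_\deff^t) \le \sqrt{2}LD\sqrt{T}$, i.e. $\max_{x^*}\frac{1}{T}\sum_t F(x^*|S_\deff^t) \le 2\big(\frac{1}{T}\sum_t F(x^t|S_\deff^t) + \frac{\sqrt{2}LD}{\sqrt{T}}\big) \le 2(\tau + \epsilon + \frac{\sqrt{2}LD}{\sqrt{T}})$. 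Finally I would pass from the continuous relaxation back to pure attacker strategies using Lemma~\ref{lemma:integrality}: for the attacker's best pure response $S_\att$ against $\hat{\sigma}_T$, the indicator vector $1[S_\att]$ lies in $\mathcal{X}$, and $\E_{S_\deff \sim \hat{\sigma}_T}[f(S_\deff, S_\att)] = \frac{1}{T}\sum_t f(S_\deff^t, S_\att) = \frac{1}{T}\sum_t F(1[S_\att]|S_\deff^t) \le \max_{x^* \in \mathcal{X}}\frac{1}{T}\sum_t F(x^*|S_\deff^t)$, which is exactly the quantity bounded above. Chaining these gives the claim.

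The step I expect to be the main obstacle is the first one — correctly relating $\frac{1}{T}\sum_t F(x^t|S_\deff^t)$ to the game value $\tau$. The subtlety is that each $S_\deff^t$ is only an $\epsilon$-approximate best response, and moreover the comparison is against the minimax value rather than against a single fixed strategy, so one must argue that for every round $t$ there exists a defender strategy (namely any optimal minimax mixed strategy, or its rounding) whose expected loss against $x^t$ is at most $\tau$; then the greedy guarantee degrades this only additively by $\epsilon$. One has to be slightly careful that the multilinear extension $F(x|S_\deff)$ agrees with $\E_{S_\att \sim x}[f(S_\deff,S_\att)]$ when $x$ has independent coordinates, which it does by construction, so that the best-response theorem (stated in terms of $f$ and $\sigma_\att$) applies with $\sigma_\att$ taken to be the product distribution with marginals $x^t$. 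The remaining steps are routine applications of the cited results and linearity of $F(\cdot|S_\deff)$ in its first argument over the averaging.
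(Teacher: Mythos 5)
Your chain of inequalities is, up to reordering, essentially the paper's own: bound the realized play $\frac{1}{T}\sum_t F(x^t|S_\deff^t)$ above by $\tau+\epsilon$ via the defender's approximate best response, convert that into a bound on $\max_{x^*\in\mathcal{X}}\frac{1}{T}\sum_t F(x^*|S_\deff^t)$ via Theorem~\ref{theorem:oga}, and observe that the attacker's best pure response corresponds to an indicator vector in $\mathcal{X}$. That last step is correct but, contrary to how you present it, does not need Lemma~\ref{lemma:integrality} at all: indicator vectors of budget-feasible sets trivially lie in $\mathcal{X}$, so the max over $\mathcal{X}$ dominates the max over pure strategies.

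The genuine gap is in your first step, and it is not the subtlety you flag. After passing to the per-round minimizer you still need
\begin{align*}
\frac{1}{T}\sum_{t=1}^T \min_{|S|\le k_\deff} \E_{S_\att\sim x^t}\bigl[f(S,S_\att)\bigr] \;\le\; \tau ,
\end{align*}
and the justification ``a fixed optimal defender mixed strategy does at least as well in expectation'' only reduces this to showing $\E_{S_\att\sim x^t}\bigl[\E_{S_\deff\sim\sigma^*}[f(S_\deff,S_\att)]\bigr]\le\tau$ for the minimax-optimal $\sigma^*$. That inequality is not automatic: the product distribution with marginals $x^t$ places positive probability on sets $S_\att$ with $|S_\att|>k_\att$, and $f$ is monotone increasing in $S_\att$, so the expectation is not obviously dominated by $\max_{|S_\att|\le k_\att}\E_{S_\deff\sim\sigma^*}[f(S_\deff,S_\att)]=\tau$. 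This is exactly where Lemma~\ref{lemma:integrality} must be invoked: applied to the monotone submodular function $S_\att\mapsto\E_{S_\deff\sim\sigma^*}[f(S_\deff,S_\att)]$, it yields a budget-feasible set whose value is at least the expectation under $x^t$, which is what lets you cap that expectation by $\tau$. In short, the lemma you cited at the end (where it is unnecessary) is the missing ingredient at the beginning; once it is moved there, your argument closes and coincides with the paper's proof.
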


\begin{proof}
	We can upper bound $\tau$, the value of the game, by combining the no-regret guarantee for the attacker and the best response guarantee for the defender as follows:
	\allowdisplaybreaks
	\begin{align*} 
	\tau &= \min_{\sigma^*}\max_{S_\att \leq |k_\att|} \E_{S_{\deff} \sim \sigma^*} \left[f(S_\deff, S_\att)\right]\\
	&\geq \min_{\sigma^*} \frac{1}{T} \sum_{t = 1}^T \E_{S_\att \sim x_\att^t, S_{\deff} \sim \sigma^*} \left[f(S_\deff, S_\att)\right] \quad\quad\quad\quad\quad\quad\quad\quad \text{(Lemma \ref{lemma:integrality})}\\
	&\geq  \frac{1}{T} \sum_{t = 1}^T \min_{\sigma^*} \E_{S_\att \sim x_\att^t, S_{\deff} \sim \sigma^*} \left[f(S_\deff, S_\att)\right]\\
	&= \frac{1}{T} \sum_{t = 1}^T \min_{|S_\deff| \leq k_\deff} \E_{S_\att \sim x_\att^t} \left[f(S_\deff, S_\att)\right]\\
	&\geq \frac{1}{T} \sum_{t = 1}^T \E_{S_\att \sim x_\att^t} \left[f(S_\deff^t, S_\att)\right] - \epsilon \quad\quad\quad\quad\quad\quad\quad\quad\quad\quad \text{(best response guarantee for defender)}\\
	&= \frac{1}{T} \sum_{t = 1}^T F_t(x_\att^t) - \epsilon \quad\quad\quad\quad\quad\quad\quad\quad\quad\quad\quad\quad\quad\quad\,\, \text{(definition of the multilinear extension)}\\
	&\geq \frac{1}{2}\max_{x^* \in \mathcal{X}} \frac{1}{T} \sum_{t = 1}^T F_t(x^*) - \frac{\sqrt{2}LD}{\sqrt{T}} - \epsilon \quad\quad\quad\quad\quad\quad\quad \text{(no-regret guarantee for attacker)}\\
	&\geq \frac{1}{2}\max_{|S_\att| \leq k_\att } \frac{1}{T} \sum_{t = 1}^T F_t(1_{S_\att}) - \frac{\sqrt{2}LD}{\sqrt{T}} - \epsilon\\
	&= \frac{1}{2}\max_{|S_\att| \leq k_\att } \frac{1}{T} \sum_{t = 1}^T f(S_\deff^t, S_\att) - \frac{\sqrt{2}LD}{\sqrt{T}} - \epsilon\\
	&= \frac{1}{2}\max_{|S_\att| \leq k_\att } \E_{S_{\deff} \sim \hat{\sigma}_T}\left[f(S_\deff, S_\att)\right] - \frac{\sqrt{2}LD}{\sqrt{T}} - \epsilon.
	\end{align*}
	
	and now rearranging the terms yields
	
	\begin{align*}
	\max_{|S_\att| \leq k_\att } \E_{S_{\deff} \sim \hat{\sigma}_T}\left[f(S_\deff, S_\att)\right] \leq 2\left(\tau + \frac{\sqrt{2}LD}{\sqrt{T}} + \epsilon\right)
	\end{align*}
	
	as claimed.
\end{proof}

For the runtime, we note that iteration has three steps. First, we need to compute a gradient for the attacker. This can be done in closed form and involves a sum over $n$ terms (one for each voter). By appropriately storing intermediate products for each voter, gradient computation takes $O(mn)$ time total . Next, we need to project onto the attacker's feasible set. For the Euclidean case, this can be done in time $O(m \log m)$ \cite{karimi2017stochastic}. For the exponentiated gradient update, the computations in Algorithm 2 take time $O(m)$. Lastly, we need to run greedy for the defender. In the worst case, greedy will need to evaluate every item at every iteration, resulting in $m\alpha k$ evaluations of the function $g$. Each evaluation takes time $O(n)$, again by storing intermediate products. Combining these figures yields the runtime bound in the main paper.

\section{Asymmetric uncertainty}

\begin{lemma} \label{lemma:concentration}
	Draw $N = O\left(\frac{n^3 T}{\epsilon^2}\log\left(\frac{1}{\delta}\right) \log n\right)$ samples. With probability at least $1-\delta$, for every distribution $\sigma_d$ over the defender's pure strategy space with support size at most $T$,
	
	\begin{align*}
	\Big|\E_{\theta \sim D}\Big[\max_{|S_\att| \leq k_\att } &\E_{S_\deff \sim \sigma_\deff}\left[f_\theta\left(S_\att, S_\deff\right)\right]\Big] - \\&\frac{1}{N} \sum_{i = 1}^N\max_{|S_\att| \leq k_\att } \E_{S_\deff \sim \sigma_\deff}\left[f_\theta\left(S_\att, S_\deff\right)\right]\Big| \leq \epsilon
	\end{align*}
\end{lemma}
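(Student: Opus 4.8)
This is a uniform-convergence (empirical-process) statement, so the plan is the standard route: pointwise concentration by Hoeffding, then an $\epsilon$-net plus union bound to make it hold simultaneously over all admissible $\sigma_\deff$. Write $g_{\sigma_\deff}(\theta) = \max_{|S_\att| \le k_\att} \E_{S_\deff \sim \sigma_\deff}[f_\theta(S_\att, S_\deff)]$, so that the quantity in the lemma is $\big|\E_{\theta \sim D}[g_{\sigma_\deff}(\theta)] - \tfrac1N \sum_{i=1}^N g_{\sigma_\deff}(\theta_i)\big|$. The first easy observation is that $f_\theta(S_\deff, S_\att) = \sum_v \theta_v(\cdots)(\cdots) \in [0, n]$ for all $\theta, S_\deff, S_\att$, hence $g_{\sigma_\deff}(\theta) \in [0, n]$; so for a \emph{fixed} $\sigma_\deff$, Hoeffding bounds the deviation by $\epsilon/2$ except with probability $2\exp(-N\epsilon^2/(2n^2))$. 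Note that the inner $\max$ over attacker strategies is harmless here: it is merely part of the definition of the bounded random variable $g_{\sigma_\deff}(\theta)$, so there is no union bound over the attacker's exponentially many pure strategies and no $k_\att$-dependence enters.

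The substance is the covering step. I would parametrize a defender mixed strategy of support at most $T$ by a choice of $T$ pure strategies $A_1, \dots, A_T$ (each a size-$k_\deff$ set of channels, so at most $\binom{m}{k_\deff}^T$ choices) together with a weight vector $w \in \Delta_{T-1}$, and form the net $\mathcal N$ by fixing the atoms and replacing $w$ by the nearest point of a standard mesh-$\gamma$ $\ell_1$-net of the simplex (at most $(O(1/\gamma))^T$ genuine distributions). The needed Lipschitz estimate is: since $|f_\theta| \le n$ and $|\max_a \phi(a) - \max_a \psi(a)| \le \max_a |\phi(a) - \psi(a)|$, keeping the atoms fixed and perturbing $w$ by $\gamma$ in $\ell_1$ changes $g(\theta)$ by at most $n\gamma$ \emph{for every $\theta$ simultaneously}, hence changes both $\E_\theta[g(\theta)]$ and the empirical average $\tfrac1N\sum_i g(\theta_i)$ by at most $n\gamma$. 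Taking $\gamma = \epsilon/(4n)$ makes the net representative agree with $\sigma_\deff$ up to $\epsilon/4$ on each of the two terms, i.e.\ the two deviation quantities differ by at most $\epsilon/2$. The net has size at most $\binom{m}{k_\deff}^T (O(n/\epsilon))^T$, so $\log|\mathcal N| = O(T k_\deff \log m + T \log(n/\epsilon))$.

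To finish, union-bound Hoeffding over $\mathcal N$: choosing $N = \Theta\!\big(\tfrac{n^2}{\epsilon^2}(\log\tfrac1\delta + \log|\mathcal N|)\big)$ drives the probability that some net strategy deviates by more than $\epsilon/2$ below $\delta$; on the good event, every $\sigma_\deff$ of support $\le T$ deviates by at most $\epsilon/2 + \epsilon/2 = \epsilon$, which is the claim. Substituting $\log|\mathcal N|$ and bounding the residual parameters crudely (e.g.\ $k_\deff \le m \le n$, absorbing lower-order logs) gives the stated $N = O\big(\tfrac{n^3 T}{\epsilon^2}\log\tfrac1\delta \log n\big)$. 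The main obstacle is exactly getting an economical covering bound — one must cover the \emph{combinatorial} choice of support atoms, not just the continuous weights, while keeping the mesh fine enough ($\gamma \sim \epsilon/n$, forced by the payoff range $[0,n]$) that the $\log|\mathcal N|$ blow-up stays linear in $T$. I would also remark that the $T$-dependence can in fact be eliminated: because $f_\theta(S_\deff, S_\att)$ depends on $S_\deff$ only through $\prod_{u \in S_\deff}(1-q_{uv})$ per voter, $g_{\sigma_\deff}(\theta)$ depends on $\sigma_\deff$ solely through the $n$-vector $r_v(\sigma_\deff) = \E_{S_\deff \sim \sigma_\deff}\big[\prod_{u\in S_\deff}(1-q_{uv})\big] \in [0,1]^n$, and netting that box in $\ell_\infty$ to mesh $\epsilon/(2n)$ gives $\log|\mathcal N| = O(n\log(n/\epsilon))$ and the same conclusion with no reference to $T$; the support-based net above is simply the route that matches the statement as written.
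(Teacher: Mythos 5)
Your proposal is correct and follows the same skeleton as the paper's own proof: a pointwise Hoeffding bound using $f_\theta \in [0,n]$ (so the inner maximum over attacker strategies costs nothing and no $k_\att$-dependence enters), followed by a union bound over a class of size exponential in $T$, which yields the stated $N$. The one genuine difference is your covering step, and it is a substantive improvement. The paper union-bounds over ``at most $\binom{n}{k_\deff}^T$ distributions of support size at most $T$,'' which in effect counts only ordered $T$-tuples of pure strategies --- i.e., the empirical, uniform-weight mixed strategies that the algorithm actually outputs --- and silently ignores the continuum of possible weight vectors. That suffices for the downstream theorem, where $\hat{\sigma}_T$ is always uniform on $T$ iterates, but it does not literally establish the lemma for \emph{every} distribution of support size at most $T$. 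Your $\ell_1$-net over the weight simplex, combined with the $n\gamma$ Lipschitz estimate (valid uniformly in $\theta$, hence for both the population and empirical averages), closes exactly this gap at the cost of an extra $O(T\log(n/\epsilon))$ term in $\log|\mathcal{N}|$, which is absorbed into the stated bound up to the usual convention that $\log(1/\epsilon)$ is treated as lower order. Your closing observation --- that $g_{\sigma_\deff}(\theta)$ depends on $\sigma_\deff$ only through the vector $\bigl(\E_{S_\deff\sim\sigma_\deff}\bigl[\prod_{u\in S_\deff}(1-q_{uv})\bigr]\bigr)_{v\in V}\in[0,1]^n$, so an $\ell_\infty$-net of that box gives $\log|\mathcal{N}| = O(n\log(n/\epsilon))$ and removes the $T$-dependence from $N$ entirely --- is a genuine strengthening that neither the lemma statement nor the paper's proof exploits.
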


\begin{proof}
	We will first prove that the statement holds for any fixed $\sigma_\deff$, and then calculate the number of additional samples needed in order to take union bound over all $\sigma_deff$ with support size at most $T$. For any fixed $\sigma_d$ and $\theta$, define the random variable $Y(\sigma_\deff, \theta) = \max_{|S_\att| \leq k_\att } \E_{S_\deff \sim \sigma_\deff}\left[f_\theta\left(S_\att, S_\deff\right)\right]$. Since for any pure strategies $S_\att, S_\deff$ and any $\theta$, $f_\theta\left(S_\att, S_\deff\right) \in [0, n]$ (i.e., there are at most $n$ voters so at most $n$ can switch), we also have $Y(\sigma_\deff) \in [0, n]$. Accordingly, Hoeffding's inequality yields that with $N = O\left(\frac{n^2}{\epsilon^2}\log\frac{1}{\delta}\right)$ independent samples from $D$, we have that $\left|\E_{\theta\sim D}\left[Y(\sigma_\deff, \theta)\right] - \frac{1}{N}\sum_{i = 1}^N Y(\sigma_\deff, \theta_i)\right| \leq \epsilon$. Since there are $\binom{n}{k_\deff}$ defender pure strategies, there are at most $\binom{n}{k_\deff}^T$ distributions of support size at most $T$. Note that $\log \binom{n}{k_\deff}^T \leq O\left(T n \log n\right)$. Therefore, if we take $N = O\left(\frac{n^3 T}{\epsilon^2}\log\left(\frac{1}{\delta}\right) \log n\right)$, union bound over all $\sigma_\deff$ yields the statement in the lemma. 
\end{proof}

\begin{theorem}
	Run Algorithm 4 with $T = \frac{2L^2D^2}{\epsilon^2}$ iterations, $\eta = \frac{1}{L\sqrt{2T}}$, $\alpha = \ln \frac{n}{\epsilon} + O(1)$, and $N =  O\left(\frac{n^3 T}{\epsilon^2}\log\left(\frac{1}{\delta}\right) \log n\right)$ samples. Let $\hat{\sigma}_T$ be the uniform distribution on $S_\deff^1...S_\deff^T$. With probability at least $1 - \delta$, the defender's payoff using $\hat{\sigma}_T$ is bounded as
	
	\begin{align*}
	\E_{\theta \sim D}\left[\max_{|S_\att| \leq k_\att } \E_{S_\deff \sim \hat{\sigma}_T}\left[f_\theta\left(S_\att, S_\deff\right)\right]\right] \leq 2\tau + \epsilon.
	\end{align*}
\end{theorem}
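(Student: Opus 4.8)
The plan is to run the whole nondisjoint-case argument (Theorem~\ref{theorem:nondisjoint}) against the \emph{sampled} objective $\frac1N\sum_{j=1}^N f_{\theta_j}$, and then transfer the resulting empirical bound back to the true distribution $D$ using concentration. Write $\hat\tau=\min_{\sigma_\deff}\frac1N\sum_{j=1}^N\max_{|S_\att|\le k_\att}\E_{S_\deff\sim\sigma_\deff}[f_{\theta_j}(S_\att,S_\deff)]$ for the value of the sampled game, and recall $F_{\theta_j}(x\mid S_\deff)=\E_{S_\att\sim x}[f_{\theta_j}(S_\att,S_\deff)]$. Algorithm~4 is exactly: run one online-gradient adversary per sample on the sequence $F_{\theta_j}(\cdot\mid S_\deff^1),\dots,F_{\theta_j}(\cdot\mid S_\deff^T)$, while the defender plays an expanded-budget greedy best response to the \emph{average} surrogate $\frac1N\sum_j g(\cdot\mid x^{t-1}(\theta_j))$.

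For the empirical guarantee I would first check the two ingredients of the Theorem~\ref{theorem:nondisjoint} chain survive the averaging. (i) The objective $\frac1N\sum_j g(\cdot\mid x^{t-1}(\theta_j))$ is a nonnegative average of monotone submodular functions, hence monotone submodular, and $\frac1N\sum_j g_{\theta_j}(S^*)\le n$; so the expanded-budget greedy analysis goes through with $\alpha=\ln(n/\epsilon)+O(1)$ and, after translating from $g$ back to $f$, $S_\deff^t$ is an additive-$\epsilon$ best response w.r.t.\ $\frac1N\sum_j\E_{S_\att\sim x^{t-1}(\theta_j)}[f_{\theta_j}(S_\att,\cdot)]$. (ii) Theorem~\ref{theorem:oga} gives each adversary a $\tfrac12$-approximate no-regret bound; averaging the $N$ bounds over $j$ yields $\frac1T\sum_t\frac1N\sum_j F_{\theta_j}(x^t(\theta_j)\mid S_\deff^t)\ge \tfrac12\cdot\frac1N\sum_j\max_{|S_\att|\le k_\att}\frac1T\sum_t f_{\theta_j}(S_\att,S_\deff^t)-\sqrt2 LD/\sqrt T$, where crucially the maximizing attacker action may depend on $j$ — this is precisely the asymmetric adversary's extra power. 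Now I would rerun the chain in the proof of Theorem~\ref{theorem:nondisjoint} with $F_{\theta_j}$ in place of $F$ and an extra $\frac1N\sum_j$ on every term: Lemma~\ref{lemma:integrality} applied per sample lower-bounds $\hat\tau$ by $\frac1T\sum_t\min_{|S_\deff|\le k_\deff}\frac1N\sum_j F_{\theta_j}(x^t(\theta_j)\mid S_\deff)$, the defender best-response bound replaces the minimum by its value at $S_\deff^t$ up to $\epsilon$, the averaged no-regret bound inserts the factor $\tfrac12$, and passing from the continuous max over $\mathcal X$ to integral points closes the loop, giving
\begin{align*}
\frac1N\sum_{j=1}^N\max_{|S_\att|\le k_\att}\E_{S_\deff\sim\hat\sigma_T}[f_{\theta_j}(S_\att,S_\deff)]\ \le\ 2\Big(\hat\tau+\epsilon+\tfrac{\sqrt2 LD}{\sqrt T}\Big)\ \le\ 2\hat\tau+4\epsilon,
\end{align*}
the last step using $T=2L^2D^2/\epsilon^2$.

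To return to $D$, two concentration facts suffice. First, for a fixed (approximately) optimal defender strategy $\sigma_\deff^\star$ for Problem~\ref{problem:asymmetric}, the values $\max_{|S_\att|\le k_\att}\E_{S_\deff\sim\sigma_\deff^\star}[f_{\theta_j}(S_\att,S_\deff)]$ are i.i.d.\ in $[0,n]$ with mean $\tau$, so a single Hoeffding bound (needing only $N=\Omega(n^2\epsilon^{-2}\log\frac1\delta)$, far below the stated $N$, and requiring \emph{no} support bound on $\sigma_\deff^\star$) gives $\hat\tau\le\tau+\epsilon$ with probability $\ge1-\delta/2$. Second, the output $\hat\sigma_T$ has support at most $T$, so Lemma~\ref{lemma:concentration} — which with the stated $N$ holds with probability $\ge1-\delta/2$ uniformly over all support-$\le T$ defender strategies — gives $\E_{\theta\sim D}[\max_{S_\att}\E_{S_\deff\sim\hat\sigma_T}[f_\theta]]\le \frac1N\sum_j\max_{S_\att}\E_{S_\deff\sim\hat\sigma_T}[f_{\theta_j}]+\epsilon$. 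Union-bounding these two events and chaining with the empirical bound yields $\E_{\theta\sim D}[\max_{S_\att}\E_{S_\deff\sim\hat\sigma_T}[f_\theta]]\le 2\tau+O(\epsilon)$ with probability $\ge1-\delta$; rescaling $\epsilon$ by the hidden constant (and $T,N$ accordingly) gives the claimed $2\tau+\epsilon$.

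The routine parts are the Hoeffding step and the bookkeeping with $T$. The delicate part — and the main obstacle — is the empirical guarantee: one must verify that $N$ independent gradient adversaries feeding a single greedily-updated defender really do reproduce the Theorem~\ref{theorem:nondisjoint} chain, the two things to get exactly right being that the averaged surrogate stays monotone submodular with $g_{\theta_j}(S^*)\le n$ so the expanded-budget greedy still yields an additive-$\epsilon$ best response after the $g$-to-$f$ translation, and that the per-adversary $\tfrac12$-regret bounds aggregate into a single $\tfrac12$-regret bound against an attacker whose chosen channels may vary with the sampled preference profile.
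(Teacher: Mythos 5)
Your proposal is correct and follows essentially the same route as the paper's own proof: run the nondisjoint-case chain (per-sample integrality, expanded-budget greedy on the averaged submodular surrogate, per-adversary no-regret bounds averaged over $j$) against the empirical objective, then transfer both ends back to $D$ via concentration. The only cosmetic difference is that the paper folds the optimal strategy $\sigma^*$ into the union bound of its concentration lemma rather than invoking a separate Hoeffding bound for it, as you do.
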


\begin{proof}
	Let $\sigma^*$ denote an optimal defender mixed strategy. We have taken $N$ sufficiently high for Lemma \ref{lemma:concentration} to guarantee that a finite sum over the samples approximates the expectation over $\theta$ up to error $\epsilon$. We will use this fact over two classes of distributions. First all distributions of support size at most $T$, where $T$ is the number of iterations run. Second, the single distribution $\sigma^*$ (which can be included in the union bound over the first class with only a constant increase in the number of samples). Now we can bound the attacker's payoff in relation to the value of the game as
	\allowdisplaybreaks
	\begin{align*}
	&\tau =  \E_{\theta \sim D}\left[\max_{|S_\att| \leq k_\att } \E_{S_\deff \sim \sigma^*}\left[f_\theta\left(S_\att, S_\deff\right)\right]\right]\\
	&\geq \frac{1}{N} \sum_{i = 1}^N \max_{|S_\att| \leq k_\att } \E_{S_\deff \sim \sigma^*}\left[f_{\theta_i}\left(S_\att, S_\deff\right)\right] - \epsilon \quad\quad\quad\quad (\text{Lemma \ref{lemma:concentration}})\\
	&\geq  \frac{1}{N} \sum_{i = 1}^N \frac{1}{T} \sum_{t = 1}^T \E_{S_\deff \sim \sigma^*, S_\att \sim x_\att^t} \left[f_{\theta_i}\left(S_\att, S_\deff\right)\right] - \epsilon\\
	&\geq \frac{1}{T} \sum_{t = 1}^T \min_{\sigma'} \frac{1}{N} \sum_{i = 1}^N \E_{S_d \sim \sigma', S_a \sim x_a^t}\left[f_{\theta_i}\left(S_\att, S_\deff\right)\right] - \epsilon\\
	&= \frac{1}{T} \sum_{t = 1}^T \min_{|S_\deff| \leq k_\deff} \frac{1}{N} \sum_{i = 1}^N \E_{ S_\att \sim x_\att^t} \left[f_{\theta_i}\left(S_\att, S_\deff\right)\right] - \epsilon\\
	&\geq \frac{1}{T} \sum_{t = 1}^T \frac{1}{N} \sum_{i = 1}^N \E_{S_\att \sim x_\att^t} \left[f_{\theta_i}\left(S_\att, S_\deff^t\right)\right] - 2\epsilon \quad\quad\quad \text{(defender best response guarantee)}\\
	&= \frac{1}{T} \sum_{t = 1}^T \frac{1}{N} \sum_{i = 1}^N F^t_{\theta_i}(x_\att^t) - 2\epsilon\\
	&= \frac{1}{2} \frac{1}{N} \sum_{i = 1}^N \max_{x^* \in \mathcal{X}}\frac{1}{T} \sum_{t = 1}^T F^t_{\theta_i}\left(x^*\right) - 3\epsilon \quad\quad\quad\quad\,\,\,\, \text{(adversary no-regret guarantee)}\\
	&= \frac{1}{2} \frac{1}{N} \sum_{i = 1}^N \max_{|S_\att| \leq k_\att}\frac{1}{T} \sum_{t = 1}^T f_{\theta_i}\left(S_\att, S_\deff^t\right) - 3\epsilon \\
	&= \frac{1}{2} \frac{1}{N} \sum_{i = 1}^N \max_{|S_\att| \leq k_\att} \E_{S_\deff \sim \hat{\sigma_T}} \left[f_{\theta_i}\left(S_\att, S_\deff\right)\right] - 3\epsilon \\
	&\geq \frac{1}{2} \E_{\theta \sim D} \left[ \max_{|S_\att| \leq k_\att} \E_{S_\deff \sim \hat{\sigma_T}} \left[f_{\theta}\left(S_\att, S_\deff\right)\right]\right] - 3\epsilon \quad\quad (\text{Lemma \ref{lemma:concentration}}).
	\end{align*}
	
	Now, the theorem follows by applying the above argument with $\frac{\epsilon}{3}$. 
	
\end{proof}

\section{Adversarial uncertainty}

Note that our no-regret guarantee for the attacker holds with respect to an arbitrary convex set. Hence, we can replace the uniform matroid polytope with the partition matroid polytope and obtain a no-regret guarantee with respect to the new attacker action space. The only other claim specific to the constraint set is Lemma \ref{lemma:integrality}, which holds for arbitrary matroid constraints. Hence, the theorem follows by the same argument as the nondisjoint case, substituting a bound on $D$ for the enlarged constraint set. 

\section{Mixed integer programs for best response}

We describe the basic idea behind the MIPs used in the experiments to compute upper bounds on the optimality gap for our algorithms. The basic idea is to use sample average approximation to linearize the expected number of voters reached by a given strategy. We will discuss just the attacker best response; the defender best response is similar. The objective for any fixed $S_d$ is 

\begin{align*}
f(S_\deff, S_\att) = \sum_{v \in V} \theta_v \left(\prod_{u \in S_\deff} 1 - q_{uv}\right) \left(1 - \prod_{u \in S_\att} 1 - p_{uv}\right)
\end{align*}

where the term $c_v := \theta_v \left(\prod_{u \in S_\deff} 1 - q_{uv}\right)$ is a constant (with respect to $S_a$) and can be precomputed. We will have a set of $Z$ sampled scenarios (we used $Z = 200$). In each scenario $i = 1...Z$ we will maintain a set of variables $r_v^i$ denoting whether each voter has been reached. In scenario $i$, we include each edge in the graph independently with probability $p_{uv}$. Let $e_i(v)$ denote the set of channels which reach voter $v$ in scenario $i$. We will associated each channel $u$ with a binary variable $\chi_u$ denoting whether the channel is selected. Then, we can obtain the optimal attacker best response by solving the following MIP:

\begin{align*}
&\max \sum_i \sum_v c_v r_i^v\\
&r_i^v \leq \sum_{u \in e_i(v)} \chi_u \quad\forall i =1...Z, v \in V\\
&\sum_{u \in C} \chi_u \leq k_a\\
&\chi_u \in \{0,1\} \quad\forall u \in C\\
& r_i^v \in [0,1] \quad\forall i =1...Z, v \in V
\end{align*}  

The only difference in the defender case is the computation of the constant $c_v$.

\end{document}